\theoremstyle{plain}
\newtheorem{thm}{\protect\theoremname}
\theoremstyle{plain}
\newtheorem{lem}[thm]{\protect\lemmaname}
\theoremstyle{plain}
\newtheorem{prop}[thm]{\protect\propositionname}
\theoremstyle{remark}
\newtheorem{rem}[thm]{\protect\remarkname}
\theoremstyle{definition}
\newtheorem{defn}[thm]{\protect\definitionname}
\theoremstyle{plain}
\newtheorem{cor}[thm]{\protect\corollaryname}
\titleformat*{\paragraph}{\bf\small}
\renewcommand{\Re}{\operatorname{Re}}
\renewcommand{\Im}{\operatorname{Im}}
\providecommand{\corollaryname}{Corollary}
  \providecommand{\definitionname}{Definition}
  \providecommand{\lemmaname}{Lemma}
  \providecommand{\propositionname}{Proposition}
  \providecommand{\remarkname}{Remark}
\providecommand{\theoremname}{Theorem}
\providecommand{\corollaryname}{Corollary}
  \providecommand{\definitionname}{Definition}
  \providecommand{\lemmaname}{Lemma}
  \providecommand{\propositionname}{Proposition}
  \providecommand{\remarkname}{Remark}
\providecommand{\theoremname}{Theorem}
\providecommand{\corollaryname}{Corollary}
\providecommand{\definitionname}{Definition}
\providecommand{\lemmaname}{Lemma}
\providecommand{\propositionname}{Proposition}
\providecommand{\remarkname}{Remark}
\providecommand{\theoremname}{Theorem}
\begin{document}

\title{Inequivalent coherent state representations in group field theory}

\author{Alexander Kegeles}
\email{kegeles@aei.mpg.de}

\affiliation{Max Planck Institute for Gravitational Physics (Albert Einstein Institute),
~\\
Am Mühlenberg 1, 14476 Potsdam-Golm, Germany, EU}

\affiliation{Institute of Physics and Astronomy, University of Potsdam, ~\\
Karl-Liebknecht-Str. 24/25, 14476 Potsdam, Germany }

\author{Daniele Oriti}
\email{oriti@aei.mpg.de}

\affiliation{Max Planck Institute for Gravitational Physics (Albert Einstein Institute),
~\\
Am Mühlenberg 1, 14476 Potsdam-Golm, Germany, EU}

\author{Casey Tomlin}
\email{caseytomlin@gmail.com}

\affiliation{Max Planck Institute for Gravitational Physics (Albert Einstein Institute),
~\\
Am Mühlenberg 1, 14476 Potsdam-Golm, Germany, EU}

\affiliation{Booz Allen Hamilton ~~\\
 901 15th St. NW Washington, DC 20005, USA}
\begin{abstract}
In this paper we propose an algebraic formulation of group field theory
and consider non-Fock representations based on coherent states. We
show that we can construct representations with infinite number of
degrees of freedom on compact base manifolds. We also show that these
representations break translation symmetry. Since such representations
can be regarded as quantum gravitational systems with an infinite
number of fundamental pre-geometric building blocks, they may be more
suitable for the description of effective geometrical phases of the
theory. 
\end{abstract}
\maketitle

\section*{Introduction}

Many contemporary approaches to quantum gravity see spacetime and
geometry as collective phenomena of more fundamental degrees of freedom.
In such theories, a transition from fundamental and non-geometric
to the effective geometric level is often associated with a phase
transition and requires control over many degrees of freedom. A key
goal is then to provide a consistent description of this phase transition.
 In the algebraic formulation of quantum field theory, different
phases are associated with inequivalent representations of the operator
algebra of observables; the study of phase transitions becomes the
study of the operator algebra and its inequivalent representations.
In this paper we suggest an algebraic formulation of group field theory
(GFT), investigate its operator algebra and provide examples of its
inequivalent representations on a compact base manifold.

Group field theory \cite{Oriti:2006se,Oriti:2011jm,Krajewski:2012aw}
is one candidate theory that aims at the description of emergence
of geometry. It is a statistical quantum field theory in which space-time
geometry and dynamics of general relativity suppose to arise as an
effective field theory. It is closely related to canonical loop quantum
gravity (LQG) \cite{Thiemann:2002nj,Rovelli2008,Ashtekar:2017iip,Bodendorfer:2016uat}
and its covariant formulation in terms of spin foams \cite{Baez:1999sr,Perez:2012wv};
for details on this relation, see \cite{Oriti:2014uga}. On the other
hand, it can also be seen as a group-theoretic enrichment of random
tensor models \cite{Gurau:2011xp,Gurau:2016cjo,Rivasseau:2016wvy},
in which tensor indices over finite sets are replaced by field arguments
\cite{Rivasseau:2011hm,Rivasseau:2012yp,Rivasseau:2013uca,Rivasseau:2016wvy,Oriti:2014qoa}.

The quanta of GFT models formally describe point particles labeled
by a (generally non-abelian) Lie group in the same way that quanta
of ordinary field theories are formally labeled by points of spacetime.
However, canonical quantization and the resulting Hamiltonian dynamics
or evolution which entirely relies on a time variable cannot be applied
here since time does not (yet) exist. Still, a Hilbert space for ``particles
on the group'' can be defined guided by a discrete geometric intuition;
in particular, the GFT quanta can be understood as quantized simplices
(tetrahedra in 4 dimension), whose quantum algebra and single particle
Hilbert space are obtained by geometric quantization of a classical
discrete geometry (see for example \cite{Baez:1999tk,Conrady:2009px,Bianchi:2010gc}).
Applying second quantization techniques, one can construct a Fock
space of quantum simplices that serves as the Hilbert space for GFT.
The simplicial building blocks that are populating the Fock space
admit a dual interpretation in terms of spin network vertices \cite{Conrady:2009px,Bianchi:2010gc,Baratin:2010nn}.

Nevertheless, the Fock vacuum provides trivial topology and geometry
and therefore, can be intuitively considered ``far away'' from any
state that carries information about non-trivial smooth spacetime
geometry. On the other hand, finitely many-particle states in GFT
have a discrete geometric interpretation, shared with loop quantum
gravity and simplicial quantum gravity, and provide a notion of generalized
piecewise-flat geometries \cite{Oriti:2014uga}. However, for a description
of smooth geometries the number of degrees of freedom, or GFT quanta,
should be very large and states with an infinite particle number are
likely to be needed.

In turn, the interactions among large numbers of GFT quanta, i.e.
their collective behavior, may give rise to phase transitions, as
in any other non-trivial quantum field theory (see for example \cite{sewell2014quantum}).
New questions, then, arise: which phase of a given GFT model, if any,
admits a geometric interpretation and a description in terms of effective
field theory and general relativity? Which quantum representation
of the fundamental GFT is appropriate to the description of such geometric
physics? 

This prompts us to study the definition of new representations in
GFT, taking full advantage of its field-theoretic structures, and
complementing parallel work on GFT renormalization \cite{Benedetti:2014qsa,Geloun:2016qyb,Geloun:2016bhh,Benedetti:2015yaa,Carrozza:2016tih,Carrozza:2016vsq,Carrozza:2016tih,Carrozza:2017vkz}.
Our approach provides a GFT counterpart of similar studies, with identical
motivations, carried out in the context of canonical loop quantum
gravity, spin foam models, tensor models and dynamical triangulations
\cite{Koslowski:2011vn,Dittrich:2014wpa,Bahr:2015bra,Delcamp:2016dqo,Dittrich:2014mxa,Dittrich:2014ala,Ambjorn:2014gsa,Delepouve:2015nia,Bonzom:2016dwy,Ambjorn:2017tnl}.

Our work is motivated by the use of GFT coherent states in the extraction
of an effective continuum dynamics \cite{Gielen:2013kla,Oriti:2015qva,Oriti:2015rwa,Oriti:2016qtz,Oriti:2016ueo,Pithis:2016wzf,Gielen:2016dss},
and the requirement of an infinite number of degrees of freedom that
is needed for description of smooth geometries. To study these two
requirements we construct coherent state representations with an infinite
number of GFT quanta and study their relation with the Fock representation.
 The idea is to avoid the limiting procedures of the particle number
for thermodynamical potentials but instead define directly representations
that correspond to an infinite system. 

Using this approach we can explicitly formulate the theory on Hilbert
spaces with infinite particle number. Such Hilbert spaces could be
better suitable for a description of geometrical states. The structure
of the constructed representations is however still very simple and
more realistic representations with richer structure have to be understood
in future work.

In the first part of this paper (\ref{sec:Group-Field-Theory}) we
set up the algebraic formulation of GFT. Using this formulation in
the second part (\ref{sec:States-and-representations}) we show how
one can construct inequivalent representations for GFT and provide
simple examples of representations associated to infinite systems
with breaking of translation symmetry.

\subsection*{Notation}

In this paper we will use the following notation and conventions.
The base manifold of GFT is considered to be $G^{\times n}$ with
$G=SU\left(2\right)$ and some fixed $n\in\mathbb{N}$; it will be
denoted, $M\doteq G^{\times n}$. A generalization of statements from
this paper to compact Lie groups other than $SU\left(2\right)$ is
straightforward, but a treatment of non-compact base manifold requires
more care. Throughout the whole paper the letter $h$ is reserved
as an element of $G$, and $\text{d}h$ refers to the Haar measure
on $G$, the Haar integral on $G$ is denoted by $\int_{G}\,\left(\cdot\right)\,\text{d}h$.
The Haar measure is normalized to $1$, $\int_{G}\,\text{d}h=1$,
and is invariant under left and right multiplication and inversion
on $G$, that is for an integrable function $f$ and $h_{1},h_{2}\in G$
\begin{align}
\int_{G}\,f\left(h_{1}h\,h_{2}\right)\text{d}h & =\int_{G}\,f\left(h\right)\,\text{d}h.\\
\int_{G}\,f\left(h^{-1}\right)\,\text{d}h & =\int_{G}\,f\left(h\right)\,\text{d}h
\end{align}
It is a unique measure on $G$ with this properties. 

The letters $x$ and $y$ are reserved for elements of $M$, and $\text{d}x$
refers to the Haar measure on $M$, the Haar integral on $M$ is denoted
by $\int_{M}\,\left(\cdot\right)\,\text{d}x$. The Haar measure $\text{d}x$
is, as above, normalized to $1$ and invariant under left and right
multiplication as well as inversion on $M$. Whenever necessary, we
will use subscripts for the components of $x$ and write $x=\left(x_{1},x_{2},\cdots,x_{n}\right)\in M$.
We denote the Lie algebra of $M$ by $\mathfrak{m}$ and by convention
choose it to be isomorphic to the space of right invariant vector
fields on $M$.

We denote the space of square integrable functions on $M$ by $L^{2}\left(M,\text{d}x\right)$
and define the bracket $\left(\cdot,\cdot\right)_{L^{2}}$, such that
for any $f,g\in L^{2}\left(M,\text{d}x\right)$, 
\begin{equation}
\left(f,g\right)_{L^{2}}=\int_{M}\,\overline{f}\left(x\right)g\left(x\right)\,\text{d}x.
\end{equation}
The real and imaginary part of expressions are referred to as $\Re\left(\cdot\right)$
and $\Im\left(\cdot\right)$, respectively. The Dirac-delta distribution
on $M$ is denoted $\delta\left(\cdot\right)$ and satisfies 
\begin{equation}
f\left(y\right)=\int_{M}\,\delta\left(y\,x^{-1}\right)\,f\left(x\right)\,\text{d}x,
\end{equation}
where $y\,x^{-1}$ denotes the group product between $y$ and $x^{-1}$. 

Throughout the paper we will use different norms on several different
spaces. We will introduce them in the text whenever we use them, but
here we summarize the notation for better overview:

$\|\cdot\|_{L^{2}}=\sqrt{\left(\cdot,\cdot\right)_{L^{2}}}$ is the
$L^{2}$ norm, $\|\cdot\|_{k,\infty}$ is the family of semi norms
with respect to which the space of smooth functions is complete, in
particular $\|\cdot\|_{\infty}$ is the supremums norm for smooth
functions, $\|f\|_{\infty}=\sup_{x\in M}\left|f\left(x\right)\right|$,
$\|\cdot\|_{\star}$ refers to the $C^{\star}$-norm, $\|\cdot\|_{\mathcal{H}}$
refers to the Hilbert space norm for whatever Hilbert space is in
question, and $\|\cdot\|_{op}=\sup_{x\in\mathcal{H}}\|\cdot x\|_{\mathcal{H}}$
is the operator norm for bounded linear operators on the Hilbert space
$\mathcal{H}$.

\section{Group Field Theory \label{sec:Group-Field-Theory}}

\subsection{Operator formulation of GFT}

Group field theory is a field theoretical description of spin networks
and simplicial geometry. It can be formulated in terms of functional
integrals \cite{Freidel:2005qe,Oriti:2006se,Oriti:2009wn,Krajewski:2012aw}
or in operator language \cite{Oriti:2013aqa}. In the latter, the
natural starting point is a Fock space spanned by creation and annihilation
operators $\varphi^{\dagger}\left(x\right),\varphi\left(x\right)$,
acting on the Fock vacuum of zero quanta $|o)$, such that 
\begin{align}
\varphi^{\dagger}\left(x\right)|o) & =|x), & \varphi\left(x\right)|o) & =0.
\end{align}
In models with a simplicial or more general topological interpretation,
like the ones related to loop quantum gravity and simplicial quantum
gravity, one requires the operators to be invariant under the right
multiplication by an arbitrary element of the group $G$, such that
for all $h\in G$ we have 
\begin{equation}
\varphi\left(x_{1,}x_{2},\cdots,x_{n}\right)=\varphi\left(x_{1}h,x_{2}h,\cdots,x_{n}h\right).\label{eq:Closure constraint}
\end{equation}
This symmetry requirement is called the closure constraint \cite{Baez:1999tk,kapovich1996symplectic,Conrady:2009px,Bianchi:2010gc,Baratin:2010nn}.
The functions that satisfy the closure constraint are called gauge
invariant functions. The canonical commutation relation (CCR) between
the fields without closure constraint is given by, 
\begin{equation}
\left[\varphi\left(x\right),\varphi^{\dagger}\left(y\right)\right]=\delta\left(x\,y^{-1}\right)\,.\label{eq: CCR-2}
\end{equation}
And for gauge invariant fields the CCR read
\begin{equation}
\left[\varphi\left(x\right),\varphi^{\dagger}\left(y\right)\right]=\int_{G}\;\prod_{j=1}^{n}\,\delta\left(x_{j}hy_{j}^{-1}\right)\,\text{d}h\,.
\end{equation}
The Fock space created by these operators can be understood as a kinematical
Hilbert space $\mathcal{H}_{\mathrm{kin}}$, formed by generic quantum
states on which no dynamics has yet been imposed. As in any background-independent
formulation of quantum gravity, one expects the quantum dynamics to
be encoded in a \textit{\emph{finite set of constraint operators}}
$C_{i}:\mathcal{H}_{\mathrm{kin}}\to\mathcal{H}_{\mathrm{kin}}$ for
$i\in\left\{ 1,\cdots,N\right\} $. Following the idea of Dirac quantization
the role of $C_{i}$ is twofold: first to select the space of physical
states formally as

\begin{equation}
\mathcal{H}_{\mathrm{phys}}=\left\{ |\psi)\in\mathcal{H}_{\mathrm{kin}}\,|\,C_{i}|\psi)=0\:\forall i\in\left\{ 1,\cdots,N\right\} \right\} ,\label{eq:Physical Hilbert space}
\end{equation}
and second, select the relevant observables $\mathcal{O}$ by 
\begin{equation}
\left[C_{i},\mathcal{O}\right]=0\qquad\forall i\in\left\{ 1,\cdots,N\right\} .
\end{equation}
Any concrete choice of such operators $C_{i}$ defines a different
GFT model. In analogy to this, the constraint operators in LQG would
be the diffeomorphism constraints and the Hamiltonian constraint,
but the GFT constraints cannot be directly interpreted as diffeomorphisms
or Hamiltonian constraints, since the degrees of freedom of the GFT
theory do not live on a continuous spacetime manifold where diffeomorphisms
would be defined and act as symmetry transformations.  

A treatment of constraint systems can be technically challenging \cite{grundling1985algebraic,grundling2000local,thiemann2003lectures}.
In particular, if the zero eigenvalue lies in the continuous part
of the spectrum of $C_{i}$ the states $|\psi)$, that satisfy $C_{i}|\psi)=0$,
are not contained in the kinematical Hilbert space, and one needs
to generalize the construction using the notion of rigged Hilbert
spaces \cite{gelfand1943}. This is already the case for finite-dimensional
systems in the presence of gauge symmetries like reparametrization
invariance, and it is an even more severe issue in continuum quantum
gravity. There, it can be partially tackled by the method of refined
algebraic quantization \cite{giulini1999generality,thiemann2003lectures},
but experience with quantum field theories tells us that we need Hilbert
spaces other than Fock to describe an infinite number of interacting
degrees of freedom \cite{sewell2014quantum}. Hence, GFT combines
both types of difficulties: a constrained system without explicit
Hamiltonian evolution, and the need to study an infinite number of
degrees of freedom.

To approach this problem and establish its rigorous operator formulation,
we use the algebraic formalism for quantum statistical mechanics in
GFT. In the following we will put the above formulation of GFT in
algebraic terms and construct Hilbert spaces with infinite particle
number as representations of the GFT algebra of observables. This
will require the definition of a Weyl algebra for GFT.

\subsection{Algebraic formulation of GFT}

The first step in the construction of an algebraic formulation is
the construction of the algebra of observables. In GFT, by convenience,
we choose this algebra to be the Weyl algebra. The later is a $C^{\star}$-algebra
that is constructed over a symplectic space of the classical theory.
For that reason we start our discussion of the algebraic construction
with a definition of the suitable symplectic space in GFT. 

\subsubsection{Symplectic space of GFT}

We begin with the space of smooth, complex valued functions on $M$
that we denote by $\mathcal{S}=\mathcal{C}^{\infty}\left(M\right)$.
Let $L_{x}:M\to M$ denote the left and $R_{x}:M\to M$ the right
multiplication on $M$ by $x\in M$ and denote the pull-back of $f\in\mathcal{S}$
by $L_{x}$ (respectively $R_{x}$) as 
\begin{align}
L_{x}^{\star}f & =f\circ L_{x} & \text{} & \left(\text{respectively}R_{x}^{\star}f=f\circ R_{x}\right).
\end{align}

\begin{lem}
\label{prop1}$\mathcal{S}$ is closed under translations; that is
for any $y\in M$ and $f\in\mathcal{S}$ the functions $L_{y}^{\star}f$
and $R_{y}^{\star}f$ are again in $\mathcal{S}$. Moreover, $L_{y}^{\star}$
and $R_{y}^{\star}$ leave the $L^{2}$-bracket, $\left(\cdot,\cdot\right)_{L^{2}}$,
invariant. 
\end{lem}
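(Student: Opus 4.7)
The plan is to treat the two claims separately; both reduce to standard properties of $M = G^{\times n}$ as a Lie group with the bi-invariant Haar measure normalized in the notation section.

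For the first claim, I would begin by recalling that since $G = SU(2)$ is a Lie group, so is the product $M = G^{\times n}$, and group multiplication $M\times M\to M$ is smooth. Fixing $y\in M$, this makes the maps $L_y,R_y:M\to M$ smooth, with smooth inverses $L_{y^{-1}}$ and $R_{y^{-1}}$, hence diffeomorphisms. The pull-back of a smooth function by a smooth map is smooth, so $L_y^\star f = f\circ L_y$ and $R_y^\star f = f\circ R_y$ lie again in $C^\infty(M) = \mathcal{S}$. This is the only substantive content of the closure statement; no estimates or limit arguments are required.

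For the second claim, I would write out the $L^2$-bracket directly and perform a change of variable using left/right invariance of $\mathrm{d}x$. Explicitly, for $f,g\in\mathcal{S}$ (and hence in $L^2(M,\mathrm{d}x)$, since $M$ is compact and $f,g$ are bounded) one computes
\begin{equation}
\bigl(L_y^\star f,\,L_y^\star g\bigr)_{L^2} = \int_M \overline{f(yx)}\,g(yx)\,\mathrm{d}x = \int_M \overline{f(x)}\,g(x)\,\mathrm{d}x = (f,g)_{L^2},
\end{equation}
where the middle equality is the left invariance of the Haar measure on $M$ stated in the notation section. The analogous calculation with $x\mapsto xy$ and right invariance handles $R_y^\star$.

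I do not expect a real obstacle here: the statement is essentially a restatement of the defining invariance properties of the Haar measure together with the fact that left and right multiplications are diffeomorphisms on a Lie group. The only mild point worth mentioning explicitly is the justification for applying the change of variable to the integrand $\overline{f(yx)}g(yx)$, which is legitimate because $\mathcal{S}\subset L^2(M,\mathrm{d}x)$ by compactness of $M$, so the integrand is absolutely integrable and the invariance of $\int_M(\cdot)\,\mathrm{d}x$ transfers directly.
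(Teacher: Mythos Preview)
Your proposal is correct and follows essentially the same approach as the paper: the paper also deduces the first claim from smoothness of $L_y$ and $R_y$, and proves the second by the identical change-of-variable computation using left (respectively right) invariance of the Haar measure. Your version is slightly more verbose but substantively identical.
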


\begin{proof}
The first statement follows from smoothness of the maps $L_{x}$ and
$R_{x}$. The second statement is a direct consequence of the left
(respectively right) invariance of the Haar measure $\text{d}x$.
That is for $f,g\in\mathcal{S}$ and $y\in M$,
\begin{align*}
\left(L_{y}^{\star}f,L_{y}^{\star}g\right)_{L^{2}} & =\int_{M}\,\overline{f}\left(yx\right)g\left(yx\right)\,\text{d}x\\
 & =\int_{M}\,\overline{f}\left(x\right)g\left(x\right)\,\text{d}x\\
 & =\left(f,g\right)_{L^{2}}.
\end{align*}
And similar for $R_{y}^{\star}f$.
\end{proof}
Let $X_{i}\in\mathfrak{m}$ be a Lie algebra element of $M$, then
$X_{i}$ acts as a derivation on smooth functions such that for $f\in\mathcal{S}$
, $I\subset\mathbb{R}$ an interval containing zero and $t\in I$,
\begin{equation}
X_{i}f\left(x\right)\doteq\partial_{t}\,f\left(e^{tX_{i}}\,x\right)\vert_{t=0},\label{eq:derivation on S}
\end{equation}
where $e^{tX_{i}}$ denotes the exponential map on $M$ \cite{sugiura1971}.
\begin{lem}
$\mathcal{S}$ equipped with topology induced by the family of semi-norms
\[
\left\{ \|f\|_{k,\infty}=\|X_{1}\cdots X_{k}f\left(g\right)\|_{\infty}:X_{1},\cdots,X_{k}\in\mathfrak{m};\forall k\in\mathbb{N}\right\} ,
\]
is a complete, topological, locally convex, vector space. 
\end{lem}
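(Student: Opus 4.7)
The plan is to verify each clause of the lemma in turn, with completeness being the only nontrivial item.

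First, $\mathcal{S}$ is manifestly a complex vector space, since sums and scalar multiples of smooth functions are smooth. Each $\|\cdot\|_{k,\infty}$ is a semi-norm because $X_{1}\cdots X_{k}$ is a linear operator on $\mathcal{S}$ and $\|\cdot\|_{\infty}$ is a norm on $C(M)$. The case $k=0$ is the full supremum norm, which separates points, so the resulting topology is Hausdorff. Local convexity and continuity of addition and scalar multiplication are automatic for any topology generated by a family of semi-norms, making $\mathcal{S}$ a locally convex topological vector space.

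Next I would reduce the uncountably indexed family to a countable one. Since $\mathfrak{m}$ is finite-dimensional, fix a basis $\{e_{1},\dots,e_{d}\}\subset\mathfrak{m}$ and for a multi-index $\alpha=(\alpha_{1},\dots,\alpha_{k})$ with $\alpha_{j}\in\{1,\dots,d\}$ put $D^{\alpha}=e_{\alpha_{1}}\cdots e_{\alpha_{k}}$. Expanding $X_{j}=\sum_{i}c_{ji}\,e_{i}$ and using linearity of the derivations in each argument yields a finite expansion
\begin{equation*}
\|X_{1}\cdots X_{k}f\|_{\infty}\leq\sum_{|\alpha|=k}|c_{\alpha}|\,\|D^{\alpha}f\|_{\infty},
\end{equation*}
while each $D^{\alpha}$ is itself of the form appearing in the defining family. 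Hence the topology coincides with the one generated by the countable family $\{\|D^{\alpha}\cdot\|_{\infty}\}_{\alpha}$, and in particular $\mathcal{S}$ is metrizable, so it suffices to check sequential completeness.

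For completeness, let $\{f_{n}\}\subset\mathcal{S}$ be Cauchy. Since $M$ is compact, for every multi-index $\alpha$ the sequence $\{D^{\alpha}f_{n}\}$ is uniformly Cauchy and thus converges uniformly to some $g_{\alpha}\in C(M)$; set $g:=g_{\emptyset}$. I would then prove by induction on $|\alpha|$ that $g\in\mathcal{S}$ and $D^{\alpha}g=g_{\alpha}$. The inductive step rests on the classical fact that if $h_{n}\to h$ and $X h_{n}\to k$ both uniformly on a compact manifold with $X$ a smooth vector field, then $h$ is $X$-differentiable with $Xh=k$. This follows by integrating the definition of $X$ along the one-parameter subgroups $e^{tX_{i}}$ provided by the exponential map (the derivation in the paper's definition of the action on $\mathcal{S}$) and swapping limit and integral, which is legitimate by uniform convergence. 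Iterating on each basis direction promotes $g$ from continuous to $C^{k}$ for all $k$, hence to $C^{\infty}(M)=\mathcal{S}$, with all derivatives being the prescribed $g_{\alpha}$. This gives $f_{n}\to g$ in every semi-norm, proving completeness.

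The main obstacle is this last inductive step: justifying that uniform limits of directional derivatives along the $e_{i}$ truly give the corresponding derivatives of the limit, and that these interchange with higher-order composed derivations coherently. The Euclidean version is standard, and its transfer to $M$ goes through verbatim because $M$ is compact, the exponential map is smooth, and all the estimates needed are uniform on $M$.
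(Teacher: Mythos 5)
Your argument is correct. Note, however, that the paper does not prove this lemma at all: it simply cites Sugiura (1971), where the completeness of $C^{\infty}(M)$ in the topology generated by these semi-norms is established. Your self-contained proof follows what is essentially the standard route (and the one taken in that reference): reduction to the countable family $\{\|D^{\alpha}\cdot\|_{\infty}\}$ via a basis of the finite-dimensional Lie algebra $\mathfrak{m}$ and multilinearity of $(X_{1},\dots,X_{k})\mapsto X_{1}\cdots X_{k}f$, then sequential completeness via uniform convergence of all iterated derivatives. The one step you should make slightly more explicit is the passage from ``$g$ has continuous iterated directional derivatives along the frame fields $e_{1},\dots,e_{d}$'' to ``$g\in C^{\infty}(M)$'': the integral identity $h(e^{tX}x)=h(x)+\int_{0}^{t}k(e^{sX}x)\,\mathrm{d}s$ obtained by passing to the limit under the integral gives differentiability along each one-parameter subgroup, and one then converts these into genuine partial derivatives in a chart by using canonical coordinates of the second kind $(t_{1},\dots,t_{d})\mapsto e^{t_{1}e_{1}}\cdots e^{t_{d}e_{d}}x$; this is not quite ``verbatim'' the Euclidean statement, but it is exactly the device used in the cited reference, and with it your induction closes. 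Also make sure the index set of the family includes $k=0$ (the bare supremum norm), as you implicitly assume when arguing that the topology is Hausdorff.
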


\begin{proof}
See reference\footnote{In this reference the authors define the lie algebra by left invariant
vector fields as opposed to our definition as right invariant vector
fields. For that reason in the original paper eq.\eqref{eq:derivation on S}
is defined by right multiplication with the exponential map. This
small change, however, does not change the results of the paper.} \cite{sugiura1971}.
\end{proof}
When the topology of $\mathcal{S}$ will be important in our discussion
we will denote this topological space by $\mathcal{S}_{\infty}.$

Since $M$ is compact, every smooth function on it is finite integrable
and we can equip $\mathcal{S}$ with the norm-topology induced by
the norm,
\begin{equation}
\|f\|_{L^{2}}^{2}=\int_{M}\,\bar{f}\left(x\right)f\left(x\right)\,\text{d}x.
\end{equation}

\begin{lem}
\label{prop4}$\mathcal{S}$ equipped with the norm topology is not
complete and its completion is the space of square integrable functions
on $M$. 
\end{lem}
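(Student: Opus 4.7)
The plan is to decompose the statement into two sub-assertions: (i) $\mathcal{S}$ is dense in $L^{2}\left(M,\text{d}x\right)$, and (ii) $\mathcal{S}$ fails to be closed in $L^{2}\left(M,\text{d}x\right)$. Together with the Riesz--Fischer theorem, which guarantees that $L^{2}\left(M,\text{d}x\right)$ is itself a complete normed space, these two facts give both the non-completeness of $\mathcal{S}$ and the identification of its completion with $L^{2}\left(M,\text{d}x\right)$, since the completion of a subspace of a complete ambient space is just its closure.

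For the density statement I would exploit the Lie-group structure of $M$. Two clean routes are available. The first invokes the Peter--Weyl theorem: the matrix coefficients of irreducible unitary representations of the compact Lie group $M$ are smooth and form an orthonormal basis of $L^{2}\left(M,\text{d}x\right)$, so their finite linear combinations lie in $\mathcal{S}$ and are dense. The second route uses mollification on the group: for any $f \in L^{2}\left(M,\text{d}x\right)$ one convolves with a smooth non-negative approximate identity $\phi_{\epsilon} \in \mathcal{S}$ concentrated in a shrinking neighbourhood of the identity of $M$; smoothness of group multiplication implies $f \star \phi_{\epsilon} \in \mathcal{S}$, and the standard approximate-identity argument, which works because of the left and right invariance of $\text{d}x$ recorded in Lemma~\ref{prop1}, yields $\|f \star \phi_{\epsilon} - f\|_{L^{2}} \to 0$ as $\epsilon \to 0$.

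For the non-closedness half I would exhibit an explicit Cauchy sequence in $\mathcal{S}$ whose $L^{2}$ limit is not smooth. A convenient choice is the characteristic function $\chi_{U}$ of a proper open set $U \subset M$ with smooth boundary, for example a small geodesic ball around the identity: this function lies in $L^{2}\left(M,\text{d}x\right)$ but is not even continuous, hence not in $\mathcal{S}$. Applying the mollification procedure from the previous paragraph produces $f_{n} = \chi_{U} \star \phi_{1/n} \in \mathcal{S}$ with $f_{n} \to \chi_{U}$ in $L^{2}$, so $(f_{n})$ is $L^{2}$-Cauchy in $\mathcal{S}$ with no limit in $\mathcal{S}$, proving non-completeness.

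The only step that needs actual care is the verification that convolution on $M$ lands inside $\mathcal{S}$ and that the approximate-identity argument genuinely converges in $L^{2}$; both are standard consequences of the smoothness of group multiplication together with the bi-invariance of the Haar measure. If one prefers to bypass mollifiers altogether, the Peter--Weyl route does so at the cost of invoking a heavier theorem, and the remainder of the argument is elementary either way.
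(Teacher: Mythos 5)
Your argument is correct, and it is genuinely different from the paper's treatment in the sense that the paper offers no argument at all: it simply cites Sugiura's work on Fourier series on compact Lie groups, where the density of $\mathcal{C}^{\infty}(M)$ in $L^{2}(M,\text{d}x)$ is established via the Peter--Weyl decomposition. Your first route (Peter--Weyl) is therefore essentially the proof hiding behind the paper's citation, while the mollification route is an equally valid, more hands-on alternative that uses only the bi-invariance of the Haar measure and smoothness of group multiplication. Your decomposition into density plus non-closedness, combined with completeness of $L^{2}$ and the fact that the completion of a subspace of a complete space is its closure, is exactly the right logical skeleton. One small point deserves a sentence in a polished write-up: since elements of $L^{2}(M,\text{d}x)$ are equivalence classes, "$\chi_{U}$ is not continuous" should be sharpened to "$\chi_{U}$ is not almost-everywhere equal to any continuous function," which holds for a geodesic ball $U$ with $0<|U|<1$ because a continuous representative would have to equal $1$ on $U$ and $0$ on the interior of its complement, forcing a contradiction at any common boundary point. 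With that caveat addressed, the proposal is complete and self-contained, which is arguably an improvement over the bare citation in the paper.
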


\begin{proof}
See reference \cite{sugiura1971}.
\end{proof}
To distinguish this topological space from the above, we will denote
it $\mathcal{S}_{L^{2}}$ whenever this will be necessary. 

Let $h\in G$ and $D:G\to M$ be a diagonal map such that $D_{h}\equiv D\left(h\right)=\left(h,\cdots,h\right)$.
We say $f$ satisfies the closure constraint (or $f$ is gauge invariant)
if 
\begin{align}
R_{D_{h}}^{\star}f & =f &  & \forall h\in G.
\end{align}
We denote the space of functions that satisfy the closure constraint
by $\mathcal{S}_{G}$. 
\begin{prop}
\label{prop3}$\mathcal{S}$ can be decomposed in complementary subspaces
$\mathcal{S}_{G}$ and $\mathcal{S}_{NG}$ such that 
\begin{equation}
\mathcal{S}_{\infty}=\mathcal{S}_{G}+\mathcal{S}_{NG},
\end{equation}
and $\mathcal{S}_{G}\cap\mathcal{S}_{NG}=\left\{ 0\right\} $. Where
$\mathcal{S}_{G}$ is a space of gauge invariant functions and $\mathcal{S}_{NG}$
is a space of functions that do not satisfy the closure constraint. 
\end{prop}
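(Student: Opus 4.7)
The natural approach is to construct an explicit projection onto $\mathcal{S}_G$ using group averaging, and then define $\mathcal{S}_{NG}$ as its kernel (since, as stated, the set of functions literally failing the closure constraint is not a vector space — we must interpret $\mathcal{S}_{NG}$ as a linear complement). Concretely, I would define
\begin{equation}
P_G f(x) \doteq \int_G (R_{D_h}^\star f)(x)\, \mathrm{d}h = \int_G f(x_1 h,\ldots, x_n h)\, \mathrm{d}h.
\end{equation}
The plan is to show that $P_G$ is a well-defined, continuous linear projection on $\mathcal{S}_\infty$ whose image is exactly $\mathcal{S}_G$, and then take $\mathcal{S}_{NG} \doteq \ker P_G$.

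First I would verify that $P_G$ maps $\mathcal{S}$ into $\mathcal{S}$: since $G$ is compact and $f$ is smooth, differentiation under the integral sign shows that $P_G f$ is smooth, and one checks continuity with respect to the seminorms $\|\cdot\|_{k,\infty}$ by commuting the derivations $X_1 \cdots X_k$ (which are right-invariant vector fields, hence commute with the right translations $R_{D_h}$) past the integral. Next, right-invariance of the Haar measure on $G$ gives, for any $h' \in G$,
\begin{equation}
R_{D_{h'}}^\star (P_G f)(x) = \int_G f(x_1 h h', \ldots, x_n h h')\, \mathrm{d}h = P_G f(x),
\end{equation}
so $P_G f \in \mathcal{S}_G$. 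A second application of the same invariance yields $P_G^2 = P_G$, and conversely if $f \in \mathcal{S}_G$ then the integrand is constant in $h$, whence $P_G f = f$. This identifies the image of $P_G$ as precisely $\mathcal{S}_G$.

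With $\mathcal{S}_{NG} \doteq \ker P_G \cap \mathcal{S}$ (a linear subspace consisting of functions whose group average vanishes and which therefore, unless they are zero, do not satisfy the closure constraint), the decomposition is then immediate: every $f \in \mathcal{S}$ can be written as
\begin{equation}
f = P_G f + (I - P_G) f,
\end{equation}
with the first summand in $\mathcal{S}_G$ and the second in $\mathcal{S}_{NG}$, so $\mathcal{S}_\infty = \mathcal{S}_G + \mathcal{S}_{NG}$. If $f$ lies in the intersection, then $f = P_G f$ (from $f \in \mathcal{S}_G$) and $P_G f = 0$ (from $f \in \mathcal{S}_{NG}$), forcing $f = 0$.

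The only genuinely delicate point is the interpretation of $\mathcal{S}_{NG}$: the statement describes it as "functions that do not satisfy the closure constraint," but such a set is not closed under addition, so the content of the proposition is really the existence of a canonical linear complement, which is what the averaging projection $P_G$ supplies. A secondary technical point, easily dispatched for a compact group, is verifying that $P_G$ is continuous in the $\mathcal{S}_\infty$ topology so that $\mathcal{S}_G$ and $\mathcal{S}_{NG}$ inherit a compatible topological vector space structure; this reduces to the standard fact that integration against a finite measure commutes with the left-invariant differential operators defining the seminorms.
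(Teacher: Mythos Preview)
Your proposal is correct and follows essentially the same approach as the paper: the paper defines the identical averaging operator $P$, establishes $\|Pf\|_{k,\infty}\leq\|f\|_{k,\infty}$ (using that the right-invariant derivations commute with $R_{D_h}$, exactly as you note), observes $P^2=P$, and then sets $\mathcal{S}_G=P\mathcal{S}_\infty$, $\mathcal{S}_{NG}=(1-P)\mathcal{S}_\infty$. Your remark that $\mathcal{S}_{NG}$ must be read as the linear complement $\ker P$ rather than the literal set of non-invariant functions is well taken and is implicit in the paper's construction.
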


\begin{proof}
Let $P$ define an operator on $\mathcal{S}$ and pointwise acting
as
\[
\left(Pf\right)\left(x\right)=\int_{G}\,\left(R_{D_{h}}^{\star}f\right)\left(x\right)\,\text{d}h.
\]
$P$ is linear since it is a composition of linear operators, $R_{D_{h}}^{\star}$
and $\int_{G}\,\left(\cdot\right)\,\text{d}h$. We show that the image
of $P$ is in $\mathcal{S}_{\infty}$. By \cite[lemma 2.1]{sugiura1971}
it is enough to show that $\|Pf\|_{k,\infty}<\infty$ for any $k\in\mathbb{N}$.
For an arbitrary fixed $k$ we get
\begin{align*}
\|Pf\|_{k,\infty} & =\sup_{x\in M}\left|X_{1}\cdots X_{k}\left(Pf\right)\left(x\right)\right|\\
 & =\sup_{x\in M}\left|X_{1}\cdots X_{k}\int_{G}\,\left(R_{D_{h}}^{\star}f\right)\left(x\right)\,\text{d}h\right|.
\end{align*}
By lemma \ref{prop1} the integrand is a smooth function and can be
upper bounded by $\sup_{x\in M}\left|\left(R_{D_{h}}^{\star}f\right)\left(x\right)\right|$.
Hence, by dominant convergence theorem 
\begin{align*}
\|Pf\|_{k,\infty} & \leq\int_{G}\,\sup_{x\in M}\left|X_{1}\cdots X_{k}\,\left(R_{D_{h}}^{\star}f\right)\left(x\right)\right|\,\text{d}h
\end{align*}
For any fixed $h\in G$ we have 
\begin{align*}
X_{1}\cdots X_{k}\,\left(R_{D_{h}}^{\star}f\right)\left(x\right) & =\partial_{t_{1}}\cdots\partial_{t_{k}}\,f\left(e^{t_{1}X_{1}}\cdots e^{t_{k}X_{k}}\,x\,D_{h}\right),
\end{align*}
where all derivatives are taken at zero. Since $x\,D_{h}\in M$ it
follows that 
\[
\sup_{x\in M}\left|X_{1}\cdots X_{k}\,\left(R_{D_{h}}^{\star}f\right)\left(x\right)\right|=\sup_{x\in M}\left|X_{1}\cdots X_{k}\,f\left(x\right)\right|.
\]
and we obtain 
\[
\|Pf\|_{k,\infty}\leq\|f\|_{k,\infty}.
\]
Therefore, $P:\mathcal{S_{\infty}}\to\mathcal{S}_{\infty}$, is a
continuous linear operator on $\mathcal{S}.$

Further, by right invariance of the Haar measure it follows that $P^{2}f=Pf$.
By \cite[theorem 1.1.8]{kadison1983fundamentals} it follows that
$\mathcal{S}_{\infty}$ can be decomposed as 
\begin{align*}
\mathcal{S}_{\infty} & =\mathcal{S}_{G}+\mathcal{S}_{NG},
\end{align*}
where $\mathcal{S}_{G}=P\mathcal{S}_{\infty}$ and $\mathcal{S}_{NG}=\left(1-P\right)\mathcal{S}_{\infty}$
and $\mathcal{S}_{G}\cap\mathcal{S}_{NG}=\left\{ 0\right\} $. 
\end{proof}
\begin{lem}
$P$ is an orthogonal projector on $L^{2}\left(M,\text{d}x\right)$.
\end{lem}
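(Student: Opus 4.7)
The plan is to check the three defining properties of an orthogonal projector on $L^{2}(M,\mathrm{d}x)$: that $P$ extends to a bounded operator on $L^{2}$, that $P^{2}=P$, and that $P$ is self-adjoint with respect to $(\cdot,\cdot)_{L^{2}}$. Any operator satisfying these is an orthogonal projection.

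First I would establish $L^{2}$-boundedness. Since $\mathrm{d}h$ is a probability measure on $G$, Jensen's inequality (or Cauchy--Schwarz applied to $1\cdot f(x D_{h})$) gives, for $f\in\mathcal{S}$,
\begin{equation*}
|(Pf)(x)|^{2}\leq \int_{G}|f(x D_{h})|^{2}\,\mathrm{d}h.
\end{equation*}
Integrating over $M$, applying Fubini, and using the right invariance of the Haar measure on $M$ to absorb the $D_{h}$-shift, I obtain $\|Pf\|_{L^{2}}\leq\|f\|_{L^{2}}$. By Lemma \ref{prop4}, $\mathcal{S}$ is dense in $L^{2}(M,\mathrm{d}x)$, so $P$ extends uniquely to a bounded operator (still of norm $\leq 1$) on all of $L^{2}(M,\mathrm{d}x)$.

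Idempotency $P^{2}=P$ was already noted in the proof of Proposition \ref{prop3}, as a direct consequence of the right invariance of the Haar measure on $G$; this relation passes to the $L^{2}$-extension by continuity.

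For self-adjointness I would compute $(Pf,g)_{L^{2}}$ directly, exchange the order of the integrals over $G$ and $M$ by Fubini, then perform the substitution $x\mapsto x D_{h^{-1}}$ under the Haar integral on $M$ to move the shift off $f$ and onto $g$, and finally apply the inversion invariance $\int_{G} F(h^{-1})\,\mathrm{d}h=\int_{G}F(h)\,\mathrm{d}h$ to recognize the result as $(f,Pg)_{L^{2}}$. Being bounded, self-adjoint, and idempotent, $P$ is an orthogonal projection on $L^{2}(M,\mathrm{d}x)$. I do not expect a genuine obstacle: every step is an invocation of the translation or inversion invariance of the Haar measure, with the only mildly subtle point being the density argument needed to extend $P$ from $\mathcal{S}$ to all of $L^{2}$.
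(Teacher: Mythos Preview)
Your argument is correct and follows essentially the same route as the paper: verify boundedness, idempotency, and self-adjointness on the dense subspace $\mathcal{S}$ using Fubini together with the right- and inversion-invariance of the Haar measure, then extend to $L^{2}(M,\mathrm{d}x)$ by continuity. Your use of Jensen's inequality to obtain $\|Pf\|_{L^{2}}\le\|f\|_{L^{2}}$ is in fact tidier than the paper's boundedness step, which is written as an equality $\|Pf\|_{L^{2}}^{2}=\|f\|_{L^{2}}^{2}$ that cannot hold for a nontrivial projection.
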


\begin{proof}
$P$ is bounded on $\mathcal{S}_{L^{2}}$ since for any $f\in\mathcal{S}$
we have by right invariance of the Haar measure 
\[
\|Pf\|_{L^{2}}^{2}=\int_{M}\,\int_{G}\,\left|f\left(x\,D_{h}\right)\right|\,\text{d}h\,\text{d}x=\int_{M}\,\left|f\left(x\right)\right|\,\text{d}x=\|f\|_{L^{2}}^{2}.
\]
Let $f,g\in\mathcal{S}$. Then by Fubini and the invariance of the
Haar measure under right multiplication and inversion, we have
\begin{align*}
\left(f,Pg\right)_{L^{2}} & =\int_{M}\,\overline{f}\left(x\right)\,\left(\int_{G}\,\left(R_{D_{h}}^{\star}g\right)\left(x\right)\,\text{d}h\right)\,\text{d}x\\
 & =\int_{M}\,\left(\int_{G}\,\overline{\left(R_{D_{h}}^{\star}f\right)}\left(x\right)\,\text{d}h\right)\,g\left(x\right)\,\text{d}x\\
 & =\left(Pf,g\right)_{L^{2}}.
\end{align*}
And for $h_{1},h_{2}\in G$ we have
\begin{align*}
\left(PPf\right)\left(x\right) & =\int_{G}\int_{G}\left(R_{D\left(h_{1}\right)}^{\star}R_{D\left(h_{2}\right)}^{\star}f\right)\left(x\right)\,\text{d}h_{1}\,\text{d}h_{2}\\
 & =\int_{G}\int_{G}\left(R_{D\left(h_{1}\right)}^{\star}R_{D\left(h_{2}\right)}^{\star}f\right)\left(x\right)\,\text{d}h_{1}\,\text{d}h_{2}\\
 & =\int_{G}\int_{G}\left(\left(R_{D\left(h_{1}h_{2}\right)}\right)^{\star}f\right)\left(x\right)\,\text{d}h_{1}\,\text{d}h_{2}\\
 & =\int_{G}\left(\left(R_{D_{h}}\right)^{\star}f\right)\left(x\right)\,\text{d}h=\left(Pf\right)\left(x\right)\;.
\end{align*}
Therefore, $P$ is an orthogonal projection on the dense domain of
$L^{2}\left(M,\text{d}x\right)$ and extends uniquely to the whole
$L^{2}\left(M,\text{d}x\right)$ by continuity. 
\end{proof}
\begin{thm}
The space $\mathcal{S}_{G}=P\mathcal{S}$ is dense in $PL^{2}\left(M,\text{d}x\right)$
\textemdash{} the image of the orthogonal projection $P$ on $L^{2}\left(M,\text{d}x\right)$. 
\end{thm}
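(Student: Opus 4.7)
The plan is to deduce this from three facts already established in the excerpt: (i) $\mathcal{S}$ is dense in $L^{2}(M,\text{d}x)$ by Lemma \ref{prop4}; (ii) $P$ extends to a bounded orthogonal projection on all of $L^{2}(M,\text{d}x)$ with $\|Pf\|_{L^{2}}\leq\|f\|_{L^{2}}$ (in fact equality on the dense smooth domain, as computed in the preceding lemma); (iii) $P^{2}=P$, so elements of $PL^{2}(M,\text{d}x)$ are exactly the $L^{2}$-fixed points of $P$.

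Given any $g\in PL^{2}(M,\text{d}x)$, I would first use (i) to pick a sequence $f_{n}\in\mathcal{S}$ with $\|f_{n}-g\|_{L^{2}}\to 0$. Then I would apply the projector $P$ termwise, obtaining $Pf_{n}\in P\mathcal{S}=\mathcal{S}_{G}$, and use (ii) together with linearity to estimate
\begin{equation*}
\|Pf_{n}-g\|_{L^{2}}=\|Pf_{n}-Pg\|_{L^{2}}=\|P(f_{n}-g)\|_{L^{2}}\leq\|f_{n}-g\|_{L^{2}},
\end{equation*}
where the first equality uses (iii), i.e.\ $Pg=g$. The right-hand side tends to zero, so $Pf_{n}\to g$ in $L^{2}$-norm, which gives the claimed density.

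There is really no substantial obstacle here: the statement is essentially the general fact that a bounded projection maps a dense subspace to a subspace dense in its range. The only thing one needs to be careful about is that $P$ is well-defined on all of $L^{2}(M,\text{d}x)$ and agrees on $\mathcal{S}$ with the integral operator used to define $\mathcal{S}_{G}$, both of which are guaranteed by the preceding lemma establishing $P$ as an orthogonal projection with dense initial domain $\mathcal{S}$. Consequently $P\mathcal{S}\subset PL^{2}(M,\text{d}x)$, and the approximation above completes the argument.
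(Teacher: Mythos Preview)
Your argument is correct and is the standard one: a bounded linear map sends a dense subspace onto a subspace dense in its range, and for a projection the range is exactly $PL^{2}(M,\text{d}x)$. Nothing is missing.

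The paper takes a somewhat different route. Rather than applying $P$ to an approximating sequence, it argues that the set $PL^{2}(M,\text{d}x)\cap\mathcal{S}$ is dense in $PL^{2}(M,\text{d}x)$ and then proves the \emph{set identity} $PL^{2}(M,\text{d}x)\cap\mathcal{S}=\mathcal{S}_{G}$: any smooth function that is gauge invariant almost everywhere satisfies $f-Pf=0$ a.e., and a smooth function vanishing a.e.\ vanishes identically, so $f\in\mathcal{S}_{G}$. Your approach is shorter and uses only boundedness of $P$; the paper's approach is more indirect but yields the additional information that the smooth gauge-invariant functions are exactly the smooth elements of $PL^{2}(M,\text{d}x)$, not merely dense in it. Note also that the paper's first step (density of $PL^{2}(M,\text{d}x)\cap\mathcal{S}$ in $PL^{2}(M,\text{d}x)$) does not follow from density of $\mathcal{S}$ in $L^{2}$ alone for a generic closed subspace; it tacitly uses $P\mathcal{S}\subseteq\mathcal{S}$ together with essentially your continuity argument, so in that sense your proof isolates the core mechanism.
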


\begin{proof}
Since $PL^{2}\left(M,\text{d}x\right)$ is given by the projection
$P$, it is a closed subspace of $L^{2}\left(M,\text{d}x\right)$.
By lemma \ref{prop4} the set $PL^{2}\left(M,\text{d}x\right)\cap\mathcal{S}$
is dense in $PL^{2}\left(M,\text{d}x\right)$. Further, any $f\in PL^{2}\left(M,\text{d}x\right)\cap\mathcal{S}$
is an almost-everywhere gauge invariant function that is smooth. Define
$g=f-Pf$. Then $g$ vanishes almost everywhere and is smooth. Hence
$g$ is zero everywhere, and we get $f\in\mathcal{S}_{G}$ and $PL^{2}\left(M,\text{d}x\right)\cap\mathcal{S}\subseteq\mathcal{S}_{G}$.
The opposite inclusion, $\mathcal{S}_{G}\subseteq PL^{2}\left(M,\text{d}x\right)\cap\mathcal{S}$,
is obvious since any $f\in\mathcal{S}_{G}$ is square integrable and
$\mathcal{S}_{G}\subseteq\mathcal{S}$ by lemma \ref{prop3}. 
\end{proof}
To proceed with the construction of the symplectic space we equip
$\mathcal{S}_{\infty}$ with a symplectic form $\mathfrak{s}:\mathcal{S}\times\mathcal{S}\to\mathbb{R}$
defined for any $f,g\in\mathcal{S}$ by 
\begin{equation}
\mathfrak{s}\left(f,g\right)=\Im\left[\left(f,g\right)_{L^{2}}\right].
\end{equation}
Restricting $\mathcal{S}$ to the subspace $\mathcal{S}_{G}$ we obtain
the symplectic form on $\mathcal{S}_{G}$ that we denote by the same
symbol $\mathfrak{s}$.

The above theorem ensures that after quantization, the one particle
Hilbert space, that is given by the $L^{2}$ closure of the $\mathcal{S}_{G}$
will be that of a quantized polygon \cite{kapovich1996symplectic}.
However, the symplectic structure of our space is different from the
symplectic structure of a single polygon. 
\begin{rem}
The space $\mathcal{S}_{G}$ is not closed under right multiplication,
meaning that in general for $f\in\mathcal{S}_{G}$ and $y\in M$ not
of the diagonal form (that is $y\neq D_{h}$ for any $h\in G$), the
function $R_{y}^{\star}f$ will not be gauge invariant. To see this
we observe 
\begin{align*}
\left(R_{y}^{\star}f\right)\left(x\right) & =f\left(xy\right),
\end{align*}
which is in general not equal to 
\begin{equation}
f\left(xD_{h}y\right)=\left(R_{y}^{\star}f\right)\left(x\,D_{h}\right).
\end{equation}
For this reason we choose the definition of the Lie algebra to be
given by right invariant vector fields on $M$ (generated by left
translation) to ensure that for any $f\in\mathcal{S}_{G}$, the function
$X_{i}f$ stays in $\mathcal{S}_{G}$. 
\end{rem}

\subsubsection{The Weyl algebra of GFT}

To define the Weyl algebra from the space $\mathcal{S}$ we follow
the standard procedure presented for example in \cite{grundling1985algebraic,grundling2000local}
and that we recall below for convenience.

First we define a $\star$-algebra $\mathcal{A}\left(\mathcal{S}\right)$
such that:
\begin{enumerate}
\item The elements of $\mathcal{A\left(S\right)}$ are complex valued functions
on $\mathcal{S}$ with support consisting of a finite subset of $\mathcal{S}$.
It follows that $\mathcal{A\left(S\right)}$ is a vector space. 
\item Then we define a $\ell^{1}$ norm on $\mathcal{A\left(S\right)}$
by 
\begin{align*}
\|A\|_{1} & =\sum_{f\in\mathcal{S}}|A\left(f\right)|.
\end{align*}
The sum on the right hand side is well defined since each element
in $\mathcal{A}\left(\mathcal{S}\right)$ is supported on a finite
subset of $\mathcal{S}.$
\item For $f\in\mathcal{S}$ we define functionals $W_{\left(f\right)}\in\mathcal{A}\left(\mathcal{S}\right)$
such that for any $g\in\mathcal{S}$ 
\begin{align*}
W_{\left(f\right)}\left(g\right) & =\begin{cases}
1 & \text{if}\:f=g\,\text{pointwise}\\
0 & \text{otherwise}
\end{cases}.
\end{align*}
These functionals form a dense linear subspace in $\mathcal{A\left(S\right)}$.
\item We then define the multiplication on that subspace by
\begin{align*}
W_{\left(f\right)}\cdot W_{\left(g\right)} & =e^{-\frac{\imath}{2}\mathfrak{s}\left(f,g\right)}\,W_{\left(f+g\right)}.
\end{align*}
and extend it to the full $\mathcal{A\left(S\right)}$ by linearity. 
\item Finally, we define the involution $W_{\left(f\right)}^{\star}=W_{\left(-f\right)}$.
\end{enumerate}
Closing $\mathcal{A\left(S\right)}$ in the $\ell^{1}$ norm provides
a Banach $^{\star}$-algebra $\mathbb{A}\left(\mathcal{S}\right)$.
This algebra can be represented by bounded linear operators on some
Hilbert space. Denoting the space of all non-degenerate, irreducible
representations of $\mathbb{A}\left(\mathcal{S}\right)$ by $Irreps$,
we define the Weyl algebra.
\begin{defn}
The Weyl algebra is a $C^{\star}$-algebra over $\mathcal{S}$ obtained
by completion of $\mathbb{A}\left(\mathcal{S}\right)$ in the $C^{\star}$-norm
\begin{equation}
\|W_{\left(f\right)}\|_{\star}:=\sup_{\pi\in Irreps}\|\pi\left(W_{\left(f\right)}\right)\|_{\mathcal{H}},\label{eq:c-star norm}
\end{equation}
We denote it $\mathfrak{A}\left(\mathcal{S}\right)$.
\end{defn}

\begin{lem}
\label{lem:automorphism}For any $x\in M$ the maps $\alpha_{x}$
and $\beta_{x}$ from $\mathfrak{A}\left(\mathcal{S}\right)$ to $\mathfrak{A}\left(\mathcal{S}\right)$
defined such that for any $f\in\mathcal{S}$ 
\begin{align}
\alpha_{x}\left(W_{\left(f\right)}\right) & =W_{\left(L_{x}^{\star}f\right)}, & \beta_{x}\left(W_{\left(f\right)}\right) & =W_{\left(R_{x}^{\star}f\right)},
\end{align}
and extended to the whole $\mathfrak{A}\left(\mathcal{S}\right)$
by linearity are $\star$-automorphisms. 
\end{lem}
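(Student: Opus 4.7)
The plan is to establish the automorphism property first on the dense Banach $\star$-subalgebra $\mathbb{A}(\mathcal{S})$ generated by the Weyl elements $W_{(f)}$, and then extend continuously to the full $C^\star$-completion $\mathfrak{A}(\mathcal{S})$. I will write everything out for $\alpha_x$; the argument for $\beta_x$ is identical with $L_x^\star$ replaced by $R_x^\star$.

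First, Lemma \ref{prop1} guarantees that $L_x^\star$ maps $\mathcal{S}$ into $\mathcal{S}$ and preserves the $L^2$-bracket; taking imaginary parts shows that it preserves the symplectic form, $\mathfrak{s}(L_x^\star f,L_x^\star g)=\mathfrak{s}(f,g)$. This is the crucial input. With it, the Weyl product is respected on generators:
\begin{align*}
\alpha_x\!\left(W_{(f)}\cdot W_{(g)}\right) &= e^{-\tfrac{\imath}{2}\mathfrak{s}(f,g)}\,W_{(L_x^\star(f+g))}\\
 &= e^{-\tfrac{\imath}{2}\mathfrak{s}(L_x^\star f,L_x^\star g)}\,W_{(L_x^\star f+L_x^\star g)}\\
 &= \alpha_x\!\left(W_{(f)}\right)\cdot\alpha_x\!\left(W_{(g)}\right),
\end{align*}
where I used linearity of $L_x^\star$. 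The involution is preserved because $\alpha_x(W_{(f)}^\star)=\alpha_x(W_{(-f)})=W_{(-L_x^\star f)}=\alpha_x(W_{(f)})^\star$. Linear extension to $\mathcal{A}(\mathcal{S})$ therefore yields a $\star$-homomorphism. Since $L_x^\star$ is a bijection on $\mathcal{S}$ with inverse $L_{x^{-1}}^\star$, the map $\alpha_{x^{-1}}$ provides a two-sided inverse on generators and hence on all of $\mathcal{A}(\mathcal{S})$, so $\alpha_x$ is a $\star$-automorphism of $\mathcal{A}(\mathcal{S})$. Moreover, $\alpha_x$ is an isometry for the $\ell^1$-norm, because the support of $\alpha_x(A)$ is the image of the support of $A$ under the bijection $L_x^\star$ and the coefficients are unchanged; thus $\alpha_x$ extends by continuity to a $\star$-automorphism of the Banach $\star$-algebra $\mathbb{A}(\mathcal{S})$.

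The remaining step, which is the main obstacle, is to promote $\alpha_x$ to the $C^\star$-completion $\mathfrak{A}(\mathcal{S})$ and to verify that it is isometric in the $C^\star$-norm \eqref{eq:c-star norm}. The argument is the standard one: for any non-degenerate irreducible representation $\pi$ of $\mathbb{A}(\mathcal{S})$, the composition $\pi\circ\alpha_x$ is again a non-degenerate irreducible representation, and $\alpha_x\mapsto(\pi\mapsto \pi\circ\alpha_x)$ is a bijection on $\mathrm{Irreps}$ with inverse induced by $\alpha_{x^{-1}}$. Hence
\[
\|\alpha_x(W_{(f)})\|_\star=\sup_{\pi\in\mathrm{Irreps}}\|\pi(\alpha_x(W_{(f)}))\|_{\mathcal{H}}=\sup_{\pi'\in\mathrm{Irreps}}\|\pi'(W_{(f)})\|_{\mathcal{H}}=\|W_{(f)}\|_\star,
\]
and the same holds for finite linear combinations. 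Continuity in the $C^\star$-norm allows the unique extension of $\alpha_x$ to all of $\mathfrak{A}(\mathcal{S})$, and the algebraic identities (product, involution, inverse $\alpha_{x^{-1}}$) pass to the completion by density, finishing the proof.
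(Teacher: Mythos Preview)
Your proof is correct and follows the same approach as the paper: both verify multiplicativity and $\star$-compatibility on Weyl generators by invoking Lemma~\ref{prop1} to ensure that $L_x^\star$ (resp.\ $R_x^\star$) preserves the symplectic form. Your version is in fact more complete than the paper's, which stops after checking the homomorphism identities on generators; you additionally supply the bijectivity via $\alpha_{x^{-1}}$ and the isometric extension to the $C^\star$-completion through the bijection on $\mathrm{Irreps}$, points the paper leaves implicit.
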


\begin{proof}
By definition $\alpha_{x}$ and $\beta_{x}$  are linear. Further
let $f,g\in\mathcal{S}$, then by lemma \ref{prop1}
\begin{align*}
\alpha_{x}\left(W_{\left(f\right)}W_{\left(g\right)}\right) & =\alpha_{x}\left(W_{\left(f+g\right)}e^{-\frac{\imath}{s}\Im\left(f,g\right)_{L^{2}}}\right)\\
 & =e^{-\frac{\imath}{2}\Im\left(f,g\right)_{L^{2}}}W_{\left(L_{x}^{\star}f+L_{x}^{\star}g\right)}\\
 & =e^{-\frac{i}{2}\Im\left(L_{x}^{\star}f,L_{x}^{\star}g\right)_{L^{2}}}W_{\left(L_{x}^{\star}f+L_{x}^{\star}g\right)}\\
 & =W_{\left(L_{x}^{\star}f\right)}W_{\left(L_{x}^{\star}g\right)}\\
 & =\alpha_{x}\left(W_{\left(f\right)}\right)\alpha_{x}\left(W_{\left(g\right)}\right).
\end{align*}
Also 
\begin{align*}
\alpha_{x}\left(W_{\left(f\right)}^{\star}\right) & =\alpha_{x}\left(W_{\left(-f\right)}\right)\\
 & =W_{\left(-L_{x}^{\star}f\right)}\\
 & =\left[\alpha_{x}\left(W_{\left(f\right)}\right)\right]^{\star}.
\end{align*}
We can similarly address $\beta_{x}$.
\end{proof}
Restricting $\mathcal{S}$ to $\mathcal{S}_{G}$ we obtain a subset
$\mathfrak{A}_{G}$ defined as
\begin{equation}
\mathfrak{A}_{G}=\overline{\text{span}\left\{ W_{\left(f\right)}\in\mathfrak{A}\left(\mathcal{S}\right)\vert f\in\mathcal{S}_{G}\right\} }^{\|.\|_{\mathfrak{A}\left(\mathcal{S}\right)}},
\end{equation}
where $\overline{\circ}^{\|.\|_{\mathfrak{A}\left(\mathcal{S}\right)}}$
denotes the closure in the $\mathfrak{A}\left(\mathcal{S}\right)$-$C^{\star}$-algebra
norm.
\begin{thm}
$\mathfrak{A}_{G}$ is a maximal $C^{\star}$-sub-algebra of $\mathfrak{A}\left(\mathcal{S}\right)$
that satisfies $\forall A\in\mathfrak{A}_{G}$, $\beta_{D_{h}}\left(A\right)=A$
for any $h\in G$.
\end{thm}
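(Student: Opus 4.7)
\emph{Plan.} I would structure the proof in three parts: first, that $\mathfrak{A}_G$ is a $C^{\star}$-subalgebra; second, that every element of $\mathfrak{A}_G$ is fixed by all $\beta_{D_h}$; and third, that this property characterises $\mathfrak{A}_G$ maximally, i.e.\ $\mathfrak{A}_G$ coincides with the full fixed-point subalgebra $\mathfrak{A}^{\beta}:=\{A\in\mathfrak{A}(\mathcal{S}):\beta_{D_h}(A)=A\text{ for every }h\in G\}$.

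Parts one and two are direct. Since Proposition \ref{prop3} gives that $\mathcal{S}_G$ is a linear subspace of $\mathcal{S}$, the Weyl product $W_{(f)}W_{(g)}=e^{-\frac{\imath}{2}\mathfrak{s}(f,g)}W_{(f+g)}$ and the involution $W_{(f)}^{\star}=W_{(-f)}$ both preserve the span $\mathrm{span}\{W_{(f)}:f\in\mathcal{S}_G\}$, and closure in the $C^{\star}$-norm yields a $C^{\star}$-subalgebra. Invariance under $\beta_{D_h}$ holds on generators since $f\in\mathcal{S}_G$ means $R_{D_h}^{\star}f=f$, so $\beta_{D_h}(W_{(f)})=W_{(f)}$; this extends to all of $\mathfrak{A}_G$ by linearity together with the fact that $\beta_{D_h}$ is an isometric $\star$-automorphism (Lemma \ref{lem:automorphism}).

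For maximality I would first handle a finite combination $B=\sum_{i=1}^{N}c_i W_{(f_i)}\in\mathbb{A}(\mathcal{S})$ with distinct $f_i$ and all $c_i\neq 0$. By construction of $\mathbb{A}(\mathcal{S})$ as the $\ell^{1}$-closure of finitely supported functions on $\mathcal{S}$, the generators $\{W_{(f)}\}_{f\in\mathcal{S}}$ are linearly independent, and this independence descends to $\mathfrak{A}(\mathcal{S})$ because faithful irreducible representations exist (e.g.\ the Fock representation). The hypothesis $\beta_{D_h}(B)=B$ for every $h\in G$ then forces the map $f_i\mapsto R_{D_h}^{\star}f_i$ to permute the finite set $\{f_1,\dots,f_N\}$ with matching coefficients, for each $h$. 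Since $G=SU(2)$ is connected and $h\mapsto R_{D_h}^{\star}f_i$ is continuous, the orbit of each $f_i$ must collapse to a point, whence $f_i\in\mathcal{S}_G$ and $B\in\mathfrak{A}_G$.

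The hard part will be extending this conclusion from finite combinations to an arbitrary $B\in\mathfrak{A}^{\beta}$. The natural device is a Haar-averaging map $E(A)=\int_{G}\beta_{D_h}(A)\,\text{d}h$: applied to an approximating sequence $B_n\to B$, the averages $E(B_n)$ are $\beta$-fixed with $E(B_n)\to E(B)=B$, and it would remain to identify $E(B_n)\in\mathfrak{A}_G$. The obstacle is that $h\mapsto\beta_{D_h}(A)$ is generally not $C^{\star}$-norm continuous on a Weyl algebra, so this Bochner integral is not a priori defined. I expect to handle this by passing to a regular (e.g.\ Fock) representation $\pi$, where $h\mapsto\pi(\beta_{D_h}(A))$ is strongly continuous, carrying out the averaging in the weak operator topology on $\pi(\mathfrak{A}(\mathcal{S}))''$, and then descending back into $\pi(\mathfrak{A}_G)$ by invoking the projection $P$ of Proposition \ref{prop3}; this final $C^{\star}$-level descent is the genuinely technical step of the argument.
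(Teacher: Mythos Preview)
Your first two parts coincide with the paper's argument almost verbatim: closure under product and involution from $\mathcal{S}_G$ being a linear subspace, and $\beta_{D_h}$-invariance on generators extended by continuity of the automorphism.

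For maximality the two approaches diverge. The paper dispatches the point in a single sentence, asserting that maximality ``follows from proposition~\ref{prop3} and the fact that we can decompose $\mathcal{S}=\mathcal{S}_G+\mathcal{S}_{NG}$ with $\mathcal{S}_G\cap\mathcal{S}_{NG}=\{0\}$,'' with no further detail. Your treatment is considerably more careful. The argument you give for finite linear combinations---linear independence of the $W_{(f)}$ forcing $h\mapsto R_{D_h}^{\star}$ to permute $\{f_1,\dots,f_N\}$, then connectedness of $G$ collapsing each orbit---is correct and makes explicit what the paper leaves entirely implicit. You then go further and flag the passage from finite combinations to arbitrary $B\in\mathfrak{A}^{\beta}$ as the genuine difficulty, correctly observing that $h\mapsto\beta_{D_h}(A)$ is not norm-continuous on a Weyl algebra, so a naive Haar average is unavailable. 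The paper does not address this point at all; its one-line invocation of the linear decomposition of $\mathcal{S}$ does not by itself control limits in the $C^{\star}$-closure.

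In short, your proposal is not merely a different route but a more rigorous one: you have identified and partially filled a gap that the paper's proof leaves open. Your suggested workaround---averaging in a regular representation and descending via the projector $P$---is a reasonable strategy, though as you acknowledge the descent back to the $C^{\star}$-level remains to be carried out. An alternative worth noting is that $\mathcal{S}_G$ and $\mathcal{S}_{NG}$ are symplectically orthogonal (since $P$ is $L^{2}$-orthogonal), so $\mathfrak{A}(\mathcal{S})\cong\mathfrak{A}(\mathcal{S}_G)\otimes\mathfrak{A}(\mathcal{S}_{NG})$; this tensor structure may streamline the fixed-point computation, but it does not remove the need for the kind of analysis you outline.
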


\begin{proof}
$\mathfrak{A}_{G}$ is spanned by Weyl elements of the form $W_{\left(f\right)}$
with $f\in\mathcal{S}_{G}\subset\mathcal{S}$, hence, $\mathfrak{A}_{G}\subset\mathfrak{A}\left(\mathcal{S}\right)$.
Since $\mathcal{S}_{G}$ is closed under addition, and multiplication
by real numbers, $\mathfrak{A}_{G}$ is closed under multiplication
and involution,
\begin{align*}
W_{\left(f\right)}W_{\left(g\right)} & =W_{\left(f+g\right)}e^{-\frac{\imath}{2}\Im\left(f,g\right)}\in\mathfrak{A}_{G},\\
W_{\left(f\right)}^{\star} & =W_{\left(-f\right)}\in\mathfrak{A}_{G}.
\end{align*}
To show that $\mathfrak{A}_{G}$ is invariant under $\beta_{D_{h}}$
for any $h\in G$ let $\left(A_{n}\right)_{n\in\mathbb{N}}$ be a
Cauchy sequence in $\mathfrak{A}_{G}$ such that 
\[
A_{n}=\sum_{i=0}^{n}c_{i}W_{\left(f_{i}\right)}\qquad\text{with}\quad c_{i}\in\mathbb{C},\quad f_{i}\in\mathcal{S}_{G}
\]
and that converges to $A\in\mathfrak{A}_{G}$. Choose $h\in G$. Then
by lemma \ref{lem:automorphism} $\beta_{D_{h}}$ is a $\star$-automorphism
on $\mathfrak{A}\left(\mathcal{S}\right)$ and the sequence $\left(\beta_{D_{h}}\left(A_{n}\right)\right)_{n\in\mathbb{N}}$
is a Cauchy sequence in $\mathfrak{A}\left(\mathcal{S}\right)$ that
converges to $\beta_{D_{h}}\left(A\right)\in\mathfrak{A}\left(\mathcal{S}\right)$.
However, if $f_{i}\in\mathcal{S}_{G}$ then $\beta_{D_{h}}\left(W_{\left(f_{i}\right)}\right)=W_{\left(R_{D_{h}}^{\star}f_{i}\right)}=W_{\left(f_{i}\right)}$
and the two sequences are identical in $\mathfrak{A}_{G}$. Thus,
the limit points have to be equal and we get, $\beta_{D_{h}}\left(A\right)=A$.
The fact that $\mathfrak{A}_{G}$ is maximal follows from proposition
\ref{prop3} and the fact that we can decompose, $\mathcal{S}=\mathcal{S}_{G}+\mathcal{S}_{NG}$
with $\mathcal{S}_{G}\cap\mathcal{S}_{NG}=\left\{ 0\right\} $. 
\end{proof}
\begin{cor}
The Weyl algebra over $\mathcal{S}_{G}$, denoted $\mathfrak{A}\left(\mathcal{S}_{G}\right)$,
is a maximal $C^{\star}$-sub-algebra of $\mathfrak{A}\left(\mathcal{S}\right)$
whose elements are invariant under $\beta_{D_{h}}$ for any $h\in G$.
\end{cor}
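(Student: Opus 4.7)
The plan is to identify $\mathfrak{A}(\mathcal{S}_G)$ with the algebra $\mathfrak{A}_G$ of the preceding theorem; the stated properties then follow at once. First I would apply the construction of Section 2.3.2 verbatim with $\mathcal{S}_G$ in place of $\mathcal{S}$, using the restriction of $\mathfrak{s}$. This is legitimate because, by proposition \ref{prop3}, $\mathcal{S}_G$ is a linear subspace closed under addition, so the generators $W_{(f)}$ with $f\in\mathcal{S}_G$ obey the same Weyl relations as in $\mathcal{S}$. The construction yields a Banach $\star$-algebra $\mathbb{A}(\mathcal{S}_G)$ and its $C^{\star}$-completion $\mathfrak{A}(\mathcal{S}_G)$.

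Next I would define a $\star$-homomorphism $\iota:\mathbb{A}(\mathcal{S}_G)\to\mathbb{A}(\mathcal{S})$ by $W_{(f)}\mapsto W_{(f)}$ for $f\in\mathcal{S}_G$, extended linearly and continuously in the $\ell^{1}$-norm. Since $\iota$ preserves the Weyl product and involution, it is a well-defined injective $\star$-homomorphism. I would then argue that $\iota$ extends isometrically to the $C^{\star}$-closures, so that its image is by construction the closure of $\mathrm{span}\{W_{(f)}\,:\,f\in\mathcal{S}_G\}$ inside $\mathfrak{A}(\mathcal{S})$, i.e.\ precisely $\mathfrak{A}_G$. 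The preceding theorem then delivers maximality and $\beta_{D_h}$-invariance.

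The main obstacle is showing that $\iota$ is isometric at the $C^{\star}$ level, i.e.\ that the intrinsic norm on $\mathfrak{A}(\mathcal{S}_G)$ coincides with the restriction of $\|\cdot\|_{\star}$ from $\mathfrak{A}(\mathcal{S})$. One direction is immediate: restricting any irreducible representation of $\mathbb{A}(\mathcal{S})$ to $\iota(\mathbb{A}(\mathcal{S}_G))$ produces a representation of $\mathbb{A}(\mathcal{S}_G)$, so the supremum in \eqref{eq:c-star norm} for $\mathbb{A}(\mathcal{S}_G)$ dominates the restriction of the norm from $\mathfrak{A}(\mathcal{S})$. For the reverse inequality I would use the decomposition $\mathcal{S}_{\infty}=\mathcal{S}_G+\mathcal{S}_{NG}$ of proposition \ref{prop3} to extend every irreducible representation of $\mathbb{A}(\mathcal{S}_G)$ to one of $\mathbb{A}(\mathcal{S})$ by tensoring with a representation on the Weyl sector over $\mathcal{S}_{NG}$. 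Equivalently, one may invoke the classical uniqueness of the $C^{\star}$-norm on a Weyl algebra over a symplectic space after verifying that $\mathfrak{s}|_{\mathcal{S}_G\times\mathcal{S}_G}$ is non-degenerate; this follows from the $\mathbb{C}$-linearity of $P$, which ensures $i\mathcal{S}_G\subset\mathcal{S}_G$, so that vanishing of $\Im(f,\cdot)_{L^{2}}$ on $\mathcal{S}_G$ propagates to the real part and forces $f=0$ by density of $\mathcal{S}_G$ in $PL^{2}(M,\mathrm{d}x)$.
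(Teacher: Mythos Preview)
Your approach is essentially the paper's: both identify $\mathfrak{A}(\mathcal{S}_G)$ with $\mathfrak{A}_G$ via the tautological map on Weyl generators and then invoke the preceding theorem. The paper simply declares this identification ``obvious'' (using the map $\eta\colon W_{(f)}\mapsto W_{(Pf)}$, which acts as the identity on $\mathfrak{A}_G$), whereas you spell out the $C^{\star}$-norm compatibility more carefully---this extra care is welcome but does not change the route.
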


\begin{proof}
This follows from the fact that $\eta:\mathfrak{A}\left(\mathcal{S}\right)\to\mathfrak{A}\left(\mathcal{S}_{G}\right)$
defined on Weyl elements by 
\begin{equation}
\eta\left(W_{\left(f\right)}\right)=W_{\left(Pf\right)},
\end{equation}
and extended to $\mathfrak{A}\left(\mathcal{S}\right)$ by linearity
is an invertible $\star$-homomorphism from $\mathfrak{A}_{G}$ to
$\mathfrak{A}\left(\mathcal{S}_{G}\right)$. The later is obvious
since on $\mathfrak{A}_{G}$, $\eta$ acts as an identity.
\end{proof}
This concludes our construction of the Weyl algebra for GFT. In the
following we will not distinguish between the algebra $\mathfrak{A}\left(\mathcal{S}\right)$
and $\mathfrak{A}\left(\mathcal{S}_{G}\right)$ since all the following
statements equally apply to both cases. For that reason we will use
$\mathfrak{A}$ to refer to the Weyl algebra (gauge invariant or not)
and use $\mathcal{S}$ for the space of smooth function (gauge invariant
or not). $\mathcal{S}_{\infty}$ and $\mathcal{S}_{L^{2}}$ then refer
to the corresponding topological spaces (gauge invariant or not).
In section \ref{subsec:Explicit-representations}, however, we will
use the gauge invariant algebra $\mathfrak{A}\left(\mathcal{S}_{G}\right)$
since it is more relevant for GFT's with simplicial interpretation.

\subsection{Algebraic states\label{subsec:Algebraic-states}}

In order to deal with states directly at the level of the algebra,
we briefly introduce the concept of \textit{\emph{algebraic state}}s.
An algebraic state is a linear, positive, normalized functional on
the algebra $\mathfrak{A}$, 
\[
\omega:\mathfrak{A}\to\mathbb{C}
\]
such that for any $A\in\mathfrak{A}$ we get 
\begin{align}
\omega\left(A^{\star}A\right) & \geq0\\
\omega\left(\mathds{1}\right) & =1\;.\nonumber 
\end{align}
The first inequality is the condition of positivity and the second
is the normalization. Specifically for the Weyl algebra the positivity
condition reads as follows: 
\begin{defn}
The functional $\omega:\mathfrak{A}\to\mathbb{C}$ is positive if,
for any finite $N\in\mathbb{N}$ and any set of complex coefficients
$\left\{ c_{n}\right\} _{n\in\left\{ 0,\cdots,N\right\} }$ and test
functions $\left\{ f_{n}\in\mathcal{S}\right\} _{n\in\left\{ 0,\cdots,N\right\} }$,
the following holds 
\begin{align*}
\sum_{n,m}^{N}c_{n}\bar{c}_{m}\,\omega\left(W_{\left(f_{n}-f_{m}\right)}\right)e^{-\imath\frac{\Im\left(f_{n},f_{m}\right)}{2}} & \geq0\;.
\end{align*}
\end{defn}

By the GNS construction, every algebraic state provides a triple
$\left(\mathcal{H}_{\omega},\pi_{\omega},\vert\Omega)\right)$, where
$\mathcal{H}_{\omega}$ is a Hilbert space, $\pi_{\omega}:\mathfrak{A}\to\mathcal{L}\left(\mathcal{H}_{\omega}\right)$
is a representation of $\mathfrak{A}$ in terms of bounded linear
operators on $\mathcal{H}_{\omega}$, and the state vector $\vert\Omega)\in\mathcal{H}_{\omega}$,
such that $\forall A\in\mathfrak{A}$ 
\begin{equation}
\omega\left(A\right)=\left(\Omega\vert\pi_{\omega}\left[A\right]\vert\Omega\right)\;.\label{eq:Expectation value}
\end{equation}
This representation is unique, up to unitarily equivalence \cite{Strocchi:2012ir}.

The algebra of observables $\pi_{\omega}\left(\mathcal{\mathfrak{A}}\left(\mathcal{S}\right)\right)$
on the GNS Hilbert space $\mathcal{H}_{\omega}$ is a sub-algebra
of bounded linear operators on $\mathcal{H}_{\omega}$, that we denote
$\mathfrak{M}$. The commutant of $\mathfrak{M}$ is a subset of bounded
linear operators of $\mathcal{L\left(H_{\omega}\right)}$ on $\mathcal{H}_{\omega}$
such that 
\begin{equation}
\mathfrak{M}^{'}=\left\{ A\in\mathcal{L}\left(\mathcal{H}_{\omega}\right)\vert\,\forall B\in\mathfrak{M}\:AB=BA\right\} .
\end{equation}
Usually, $\mathfrak{M}$ is not closed in the strong operator topology
on $\mathcal{H}_{\omega}$. This is because the $C^{\star}$- norm
(equation \eqref{eq:c-star norm}) is stronger than the operator norm.
The closure of $\mathfrak{M}$ in the strong (or equivalently, weak)
operator topology is called the von Neumann algebra and is equal to
the bicommutant of $\mathfrak{M}$ by the von Neumann theorem (see
for example \cite{bratteli1997operator}). We denote the von Neumann
algebra of the $\omega$-GNS representation $\mathfrak{M}^{''}$.

The center of the von Neumann algebra is then defined as $Z=\mathfrak{M}^{'}\cap\mathfrak{M}^{''}$.
A state is called factor if the center of its von Neumann algebra
contains only multiples of identity.

A state $\omega$ is called pure if it can not be written as a convex
combination of two or more states 
\[
\omega=\lambda\omega_{1}+\left(1-\lambda\right)\omega_{2}\qquad0<\lambda<1.
\]
where $\omega_{1},\omega_{2},\omega$ are pairwise distinct. Otherwise
it is called mixed. The GNS representation of a state is irreducible
if and only if the state is pure \cite[Theorem 2.3.19]{bratteli1997operator}.
The GNS representation of a state is irreducible if the state is factor.

Most algebraic states are mere mathematical artifacts, and one needs
a prescription for selecting interesting specific states that can
be considered of physical relevance. One strategy is to rely on the
quantum dynamics, encoded in a constraint operator. From the algebraic
point of view the constraint operator is therefore related to the
choice of the folium, or conversely, information about the constraint
operator is partly encoded in the algebraic state.

We will not discuss the constraint operator explicitly, since little
is known at present about the constraint operators underlying specific
GFT models. Instead and reasonably, using the following criteria starting
from the Fock representation of GFT, we consider state sequences that
satisfy two conditions:
\begin{enumerate}
\item All states in the sequence are coherent states.

This is mainly motivated by the use of GFT coherent states in the
extraction of an effective continuum dynamics in the series of works
\cite{Gielen:2013kla,Oriti:2016qtz,Oriti:2016ueo,Gielen:2016dss}.
Of course, coherent states are also key for the classical approximation
of any QFT, and routinely used in particle physics, many-body systems
and condensed matter theory, which provides further motivation. 
\item The particle number of the limit state diverges.

As described above, it is reasonable to expect that quantum states
that describe smooth geometries contain infinitely many particles.
This is only possible if the particle number operator in the corresponding
representation is formally divergent and by consequence, if the corresponding
representation is non-Fock. 
\end{enumerate}
In the next section we provide simple explicit examples for GFT representations
that satisfy these two properties.

\section{States and representations\label{sec:States-and-representations}}

\subsection{Fock states and the Fock representation}

The Weyl algebra $\mathfrak{A}$ admits the Fock representation, which
is given by the GNS representation of the algebraic state 
\begin{equation}
\omega_{F}\left(W_{\left(f\right)}\right)=e^{-\frac{\|f\|_{L^{2}}^{2}}{4}}\;.\label{eq:Fock state}
\end{equation}
Since the above state is regular, i.e. the function $\Omega\left(t\right):=\omega_{F}\left(W_{\left(tf\right)}\right)$
for $t\in I\subset\mathbb{R}$ and any fixed $f\in\mathcal{S}$ is
smooth, the generator of the Weyl operator exists by Stone's theorem
\cite{reed1980methods}. Denoting the corresponding GNS triple by
$\left(\mathcal{H}_{F},\pi_{F},\vert o)\right)$, we can write 
\begin{equation}
(o|\pi_{F}\left[W_{\left(f\right)}\right]|o)=(o|e^{\imath\Phi_{F}\left(f\right)}|o)\;,
\end{equation}
where $\Phi_{F}\left(f\right)$ is an essentially self-adjoint generator
of $\pi_{F}\left[W_{\left(f\right)}\right]$ in the Fock representation,
defined on the dense domain 
\begin{align*}
D\left(\Phi_{F}\right) & =\left\{ \sum_{i}^{N}c_{i}\,\pi_{F}\left[W_{\left(f_{i}\right)}\right]|o)|\,c_{i}\in\mathbb{C}\,,f_{i}\in\mathcal{S},\,N\in\mathbb{N}\right\} .
\end{align*}
We can obtain the action of $\Phi_{F}\left(f\right)$ on $D\left(\Phi_{F}\right)$
by differentiation. For any $\vert\psi)\in D\left(\Phi_{F}\right)$
and appropriate set of complex coefficients $\left\{ c_{i}\right\} _{i\in\left\{ 0,\cdots,N\right\} }$
and test functions $\left\{ f_{i}\right\} _{i\in\left\{ 0,\cdots,N\right\} }$
such that 
\begin{equation}
\vert\psi)=\sum_{i=0}^{N}c_{i}\,\pi_{F}\left[W_{\left(f_{i}\right)}\right]\vert o),
\end{equation}
we get
\begin{equation}
(o\vert\Phi_{F}\left(f\right)\vert\psi)=-\imath\partial_{t}\,\omega_{F}\left(W_{\left(tf\right)}\sum_{i=0}^{N}c_{i}\,W_{\left(f_{i}\right)}\right)\vert_{t=0}.\label{eq:Phi generator}
\end{equation}
In particular we obtain for any $f\in\mathcal{S}$
\begin{equation}
(o\vert\Phi_{F}\left(f\right)\vert o)=0,
\end{equation}
and 
\begin{equation}
\|\Phi_{F}\left(f\right)\vert o)\|_{\mathcal{H}}^{2}=(o\vert\Phi_{F}\left(f\right)\Phi_{F}\left(f\right)\vert o)=\frac{1}{2}\|f\|_{L^{2}}^{2}.\label{eq:square of Phi_F}
\end{equation}
By similar calculations it follows that the operators $\Phi_{F}\left(f\right)$
satisfy the commutation relation, for any $f,g\in\mathcal{S}$ 
\begin{equation}
\left[\Phi_{F}\left(f\right),\Phi_{F}\left(g\right)\right]=\imath\Im\left[\left(f,g\right)_{L^{2}}\right]\label{eq:CCR for field operators}
\end{equation}
We call $\Phi_{F}\left(f\right)$ the field operator of GFT. 

We can also define the creation and annihilation operators by 
\begin{align}
\psi_{F}\left(f\right) & =\frac{1}{\sqrt{2}}\left[\Phi_{F}\left(f\right)+\imath\Phi_{F}\left(\imath f\right)\right]\label{eq:anihilation operator}\\
\psi_{F}^{\dagger}\left(f\right) & =\frac{1}{\sqrt{2}}\left[\Phi_{F}\left(f\right)-\imath\Phi_{F}\left(\imath f\right)\right],
\end{align}
with $\psi_{F}\left(f\right)^{\dagger}=\psi_{F}^{\dagger}$$\left(f\right)$,
such that $\psi_{F}\left(f\right)$ is anti-linear in $f$, $\psi_{F}^{\dagger}\left(f\right)$
is linear in $f$, both are closed on $D\left(\Phi_{F}\right)$ and
fulfill the canonical commutation relations 
\begin{equation}
\left[\psi_{F}\left(f\right),\psi_{F}\left(g\right)\right]=\left[\psi_{F}^{\dagger}\left(f\right),\psi_{F}^{\dagger}\left(g\right)\right]=0
\end{equation}
and 
\begin{equation}
\left[\psi_{F}\left(f\right),\psi_{F}^{\dagger}\left(g\right)\right]=\left(f,g\right)\;.
\end{equation}
From equations \eqref{eq:anihilation operator}, \eqref{eq:Phi generator}
and \eqref{eq:square of Phi_F} it follows that
\begin{align*}
\|\psi_{F}\left(f\right)\vert o)\|_{\mathcal{H}}^{2} & =(o\vert\psi_{F}^{\dagger}\left(f\right)\psi_{F}\left(f\right)\vert o)=0,
\end{align*}
and therefore 
\begin{equation}
\psi_{F}\left(f\right)\vert o)=0,
\end{equation}
for all $f\in\mathcal{S}$. Hence, $\vert o)$ is the Fock vacuum
with respect to the annihilation operator $\psi_{F}\left(f\right)$
and the space $\mathcal{H}_{F}$ is spanned by polynomials of creation
operators $\psi_{F}^{\dagger}\left(f\right)$ applied on $\vert o)$.

The Fock state is pure and hence the GNS representation of $\omega_{F}$
is irreducible \cite{strocchi2005symmetry}. Also the Fock representation
is the unique representation (up to unitary equivalence)\textcolor{blue}{{}
}in which the particle number operator $N$ exists, formally given
by
\begin{align}
N & =\sum_{i\in\mathbb{N}}\psi_{F}^{\dagger}\left(f_{i}\right)\psi_{F}\left(f_{i}\right)\label{eq:particle number operator}
\end{align}
for some complete orthonormal basis $\left\{ f_{i}\right\} _{i\in\mathbb{N}}$
of $L^{2}\left(M,\text{d}x\right)$. 

\subsection{Coherent states and non-Fock representations}

Usually coherent states are characterized as eigenstates of the annihilation
operators in the Fock representation, and hence require a notion of
the Hilbert space for their very definition. In the algebraic approach,
this characterization is avoided by introducing a generalized notion
of coherent states directly at the level of the algebra. This is described
in \cite{honegger1990general,honegger1990general1}. Below we briefly
summarize some of the results of that work that will be important
for our discussion. 
\begin{defn}
Let $\Gamma:\mathcal{S}_{\infty}\to\mathbb{C}$ be a continuous linear
form on $\mathcal{S}_{\infty}$. A state $\omega$ defined on the
Weyl elements as
\begin{equation}
\omega_{\Gamma}\left(W_{\left(f\right)}\right)=\omega_{F}\left(W_{\left(f\right)}\right)\,e^{\imath\,\sqrt{2}\Re\left[\Gamma\left(f\right)\right]},\label{eq:coherent state}
\end{equation}
and extended to $\mathfrak{A}\left(\mathcal{S}\right)$ by linearity,
is called a coherent state. It is pure and regular \cite{honegger1990general}. 
\end{defn}

With this definition the Fock state is the special case of the above
family of coherent states for $\Gamma=0$.

Any linear functional $\Gamma$ corresponds to a well defined state
\cite{honegger1990general}. It should be noticed that there exist
even more general definitions of coherent states, but this is the
one that most closely reflects the condition of being an eigenfunction
of the annihilation operator. 
\begin{prop}[{{{\cite[Proposition 2.5]{honegger1990general1}}}}]
\label{prop:Coherent state theorem} The state $\omega$ of the above
form is equivalent to the Fock state, if and only if $\Gamma$ is
continuous on $\mathcal{S}_{L^{2}}$. 
\end{prop}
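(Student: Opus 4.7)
The plan is to establish the two implications by different techniques: the ``if'' direction reduces, via the Riesz representation theorem, to the existence of a unitary displacement operator in the Fock representation that implements $\omega_{\Gamma}$ as a vector state, while the ``only if'' direction uses the strong operator continuity of the Fock-space Weyl operators together with the uniqueness of the GNS construction.

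For the ``if'' direction I would assume $\Gamma$ is continuous on $\mathcal{S}_{L^{2}}$. By lemma \ref{prop4}, $\mathcal{S}$ is dense in $L^{2}(M,\mathrm{d}x)$, so $\Gamma$ extends uniquely to a bounded linear functional on $L^{2}(M,\mathrm{d}x)$; the Riesz representation theorem then yields $g\in L^{2}(M,\mathrm{d}x)$ with $\Gamma(f)=(g,f)_{L^{2}}$ for every $f\in\mathcal{S}$. Equation (\ref{eq:square of Phi_F}) shows that $\Phi_{F}$ is $L^{2}$-isometric on $|o)$, which together with the regularity of $\omega_{F}$ lets one extend $\Phi_{F}$ by continuity from $\mathcal{S}$ to all of $L^{2}(M,\mathrm{d}x)$ and apply Stone's theorem to produce a unitary displacement $U:=e^{-\imath\sqrt{2}\Phi_{F}(\imath g)}\in\mathcal{L}(\mathcal{H}_{F})$. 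A short manipulation with the Weyl relations, combined with the identity $\Im(\imath g,f)_{L^{2}}=\Re(g,f)_{L^{2}}$, gives
\begin{equation*}
U^{\star}\pi_{F}(W_{(f)})U=e^{\imath\sqrt{2}\Re[\Gamma(f)]}\,\pi_{F}(W_{(f)}),
\end{equation*}
and hence $(U|o),\pi_{F}(W_{(f)})U|o))=\omega_{\Gamma}(W_{(f)})$ for every $f\in\mathcal{S}$. Thus $\omega_{\Gamma}$ is realized as a vector state of the irreducible Fock representation, and uniqueness of the GNS triple forces $\pi_{\omega_{\Gamma}}\cong\pi_{F}$.

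For the ``only if'' direction I would argue by contrapositive. If $\Gamma$ fails to be continuous on $\mathcal{S}_{L^{2}}$ then by $\mathbb{R}$-homogeneity of $\Re\Gamma$ one can select a sequence $(f_{n})\subset\mathcal{S}$ with $\|f_{n}\|_{L^{2}}\to 0$ and $\sqrt{2}\,\Re\Gamma(f_{n})=\pi$ for every $n$, so that
\begin{equation*}
\omega_{F}(W_{(f_{n})})\longrightarrow 1,\qquad\omega_{\Gamma}(W_{(f_{n})})\longrightarrow-1.
\end{equation*}
On the Fock vacuum the estimate (\ref{eq:square of Phi_F}) yields $\|\pi_{F}(W_{(f_{n})})|o)-|o)\|_{\mathcal{H}}^{2}=2-2e^{-\|f_{n}\|_{L^{2}}^{2}/4}\to 0$; combined with the uniform bound $\|\pi_{F}(W_{(f_{n})})\|_{op}=1$ and a standard $\varepsilon/3$ argument over the cyclic span of $|o)$, this upgrades to strong convergence $\pi_{F}(W_{(f_{n})})\to\mathds{1}$ on all of $\mathcal{H}_{F}$. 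Hence every unit vector $\Psi\in\mathcal{H}_{F}$ satisfies $(\Psi,\pi_{F}(W_{(f_{n})})\Psi)\to 1$. If $\omega_{\Gamma}$ were unitarily equivalent to $\omega_{F}$, the cyclic vector of its GNS representation would transport to such a unit vector in $\mathcal{H}_{F}$, contradicting $\omega_{\Gamma}(W_{(f_{n})})\to -1$.

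The main technical obstacle lies in the ``if'' direction, where one must construct $U$ even though in general $g\in L^{2}(M,\mathrm{d}x)\setminus\mathcal{S}$, so that $W_{(g)}$ is not itself an element of $\mathfrak{A}(\mathcal{S})$. The unitary $U$ has therefore to be produced directly inside $\mathcal{L}(\mathcal{H}_{F})$ by extending the regular generators $\Phi_{F}$ from $\mathcal{S}$ to all of $L^{2}(M,\mathrm{d}x)$ using (\ref{eq:square of Phi_F}) and Stone's theorem, and one has to verify that the canonical commutation relation (\ref{eq:CCR for field operators}) continues to hold on the extended domain. Once these extensions are in place, the rest of the argument reduces to routine manipulations with the Weyl relations.
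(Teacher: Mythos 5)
Your proof is correct, and it is worth noting that the paper does not actually prove this proposition itself---it cites Honegger--Rieckers and offers only an intuitive sketch---so your argument is genuinely more complete than what appears in the text. On the ``if'' direction the two routes share the Riesz-representation step, but where the paper simply asserts that the GNS space ``is Fock'' because $|\Gamma)$ satisfies the eigenvalue equation \eqref{eq:Eigenvalue equation}, you exhibit the intertwiner explicitly as a displacement unitary $U$ and realize $\omega_{\Gamma}$ as the vector state of $U|o)$ in the irreducible Fock representation; the cost is the extension of $\Phi_{F}$ and of the Weyl relations from $\mathcal{S}$ to $L^{2}\left(M,\text{d}x\right)$, which you rightly flag as the real technical content (also check the sign: with the paper's convention $\left(f,g\right)_{L^{2}}=\int\overline{f}g$ one has $\Im\left(\imath g,f\right)_{L^{2}}=-\Re\left(g,f\right)_{L^{2}}$, which merely flips the sign in the exponent defining $U$). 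The ``only if'' directions differ genuinely: the paper computes $(\Gamma|N|\Gamma)=\|\Gamma\|_{op}^{2}$, notes it diverges for discontinuous $\Gamma$, and invokes the characterization of the Fock representation as the unique one carrying a number operator, whereas you run a contrapositive with a sequence $f_{n}\to0$ in $\mathcal{S}_{L^{2}}$ on which $\sqrt{2}\,\Re\Gamma\left(f_{n}\right)=\pi$, so that $\pi_{F}\left(W_{\left(f_{n}\right)}\right)\to\mathds{1}$ strongly while $\omega_{\Gamma}\left(W_{\left(f_{n}\right)}\right)\to-1$. Your version buys elementarity and self-containment---it uses only the norm estimate \eqref{eq:square of Phi_F} and avoids the (itself nontrivial) uniqueness statement about $N$---while the paper's sketch buys brevity at the price of leaning on the cited reference. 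The one step you should spell out is that unboundedness of the complex-linear $\Gamma$ on the $L^{2}$ unit sphere forces unboundedness of $\Re\Gamma$ there (replace $h_{n}$ by $\pm h_{n}$ or $\pm\imath h_{n}$ as needed), after which the rescaling to $\sqrt{2}\,\Re\Gamma\left(f_{n}\right)=\pi$ is immediate.
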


The detailed proof of this proposition is presented in \cite{honegger1990general1},
but we provide an intuitive sketch.

Assume that $\Gamma$ is a continuous functional on $\mathcal{S}_{L^{2}}$,
and hence, it extends by continuity to $L^{2}\left(M,\text{d}x\right)$.
Then by the Riesz lemma there exists an $\gamma\in L^{2}\left(M,\text{d}x\right)$
such that for any $f\in S_{L^{2}}$ 
\begin{equation}
\Gamma\left(f\right)=\int_{M}\,f\left(x\right)\cdot\bar{\gamma}\left(x\right)\,\text{d}x,
\end{equation}
and 
\begin{equation}
\|\Gamma\|_{op}=\|\gamma\|_{L^{2}}.
\end{equation}

The state $\omega_{\Gamma}$ provides a GNS triple $\left(\mathcal{H}_{\Gamma},\pi_{\Gamma},\vert\Gamma)\right)$.
It is not difficult to see that in this case the GNS Hilbert space
is Fock and that $\overline{L\left(f\right)}$ is the eigenvalue of
the state vector $|\Gamma)$ \cite{honegger1990general}, i.e. 
\begin{equation}
\psi_{\Gamma}\left(f\right)|\Gamma)=\overline{\Gamma\left(f\right)}|\Gamma)=\left(f,\gamma\right)_{L^{2}}|\Gamma)\;.\label{eq:Eigenvalue equation}
\end{equation}
Since the representation is Fock, the particle number operator, eq.
\eqref{eq:particle number operator}, exists and its expectation value
is given by 
\begin{equation}
(\Gamma|N|\Gamma)=\sum_{i\in\mathbb{N}}\left|\Gamma\left(f_{i}\right)\right|^{2}=\|\gamma\|=\|\Gamma\|_{op}\;.
\end{equation}
That is, the particle number is given by the $L^{2}$ norm of $\gamma$
or equivalently the operator norm of $\Gamma$. When $\Gamma$ is
discontinuous on $\mathcal{S}_{L^{2}}$ and, hence, unbounded on $L^{2}\left(M,\text{d}x\right)$
the global particle number is ill-defined and the representation can
not be Fock.

The non-Fock coherent states are hence classified by functionals $\Gamma$
which are continuous on $\mathcal{S}_{\infty}$ but discontinuous
on $\mathcal{S}_{L^{2}}$, sometimes called the space of tempered
microfunctions. 

By the Riesz-Markov theorem every functional $\Gamma$ on $\mathcal{S}_{\infty}$
is of the form 
\begin{equation}
\Gamma\left(f\right)=\int_{M}\:f\left(x\right)\,\text{d}\nu\left(x\right),\label{eq:Riesz decomposition theorem}
\end{equation}
for some Baire measure $\nu$.

From this we can easily state 
\begin{cor}
\label{cor:Invariance -> Fock}If $\Gamma$ is invariant under left
multiplication i.e. for any $y\in M$, $\Gamma\left(L_{y}^{\star}f\right)=\Gamma\left(f\right)$
for any $f\in\mathcal{S}$ , then the coherent state $\omega_{\Gamma}$
is Fock. 
\end{cor}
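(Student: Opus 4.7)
The plan is to use Proposition \ref{prop:Coherent state theorem} and reduce the problem to showing that left-invariance of $\Gamma$ forces $\Gamma$ to be continuous in the $L^{2}$-topology, which is what guarantees equivalence with the Fock state.

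First I would invoke the Riesz--Markov representation \eqref{eq:Riesz decomposition theorem} to write $\Gamma(f)=\int_{M}f(x)\,\mathrm{d}\nu(x)$ for some (complex) Baire measure $\nu$ on $M$. Translating the invariance hypothesis into a statement about $\nu$ gives, for every $y\in M$ and every $f\in\mathcal{S}$,
\begin{equation}
\int_{M} f(yx)\,\mathrm{d}\nu(x) \;=\; \int_{M} f(x)\,\mathrm{d}\nu(x).
\end{equation}
Since smooth functions are dense in the space of continuous functions on the compact manifold $M$, this identity extends to all $f\in C(M)$, and therefore the push-forward $(L_{y})_{\star}\nu$ coincides with $\nu$ for every $y\in M$; that is, $\nu$ is left-invariant.

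Next I would apply uniqueness of the Haar measure on $M$. Decomposing $\nu$ into real and imaginary parts, and each of these via the Hahn decomposition into positive and negative components, each piece is a finite, positive, left-invariant Baire measure on the compact group $M$. By the uniqueness statement for the Haar measure (recalled in the Notation section), each piece is a nonnegative multiple of $\mathrm{d}x$, and therefore $\mathrm{d}\nu=c\,\mathrm{d}x$ for some constant $c\in\mathbb{C}$. Consequently
\begin{equation}
\Gamma(f) \;=\; c\int_{M} f(x)\,\mathrm{d}x \;=\; \bar{c}\,(\mathbf{1},f)_{L^{2}},
\end{equation}
where $\mathbf{1}$ denotes the constant function equal to $1$. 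Since $M$ is compact and $\mathrm{d}x$ is normalized, $\mathbf{1}\in L^{2}(M,\mathrm{d}x)$, so by the Cauchy--Schwarz inequality $|\Gamma(f)|\leq|c|\,\|f\|_{L^{2}}$, showing that $\Gamma$ is continuous on $\mathcal{S}_{L^{2}}$.

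Finally, applying Proposition \ref{prop:Coherent state theorem} yields that $\omega_{\Gamma}$ is equivalent to the Fock state $\omega_{F}$, hence Fock. The only subtle step is the extraction of the measure-theoretic invariance from the functional invariance and the application of Haar uniqueness to a complex measure; both are handled by the density of $\mathcal{S}$ in $C(M)$ and the standard Hahn decomposition, so I expect no serious obstacle beyond this bookkeeping.
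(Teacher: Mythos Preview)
Your proof is correct and follows essentially the same route as the paper: Riesz--Markov representation of $\Gamma$, left-invariance of the resulting measure, uniqueness of the Haar measure to conclude $\mathrm{d}\nu=c\,\mathrm{d}x$, and then $L^{2}$-continuity via Cauchy--Schwarz (the paper phrases this as H\"older) together with Proposition~\ref{prop:Coherent state theorem}. You are in fact a bit more careful than the paper in treating $\nu$ as a complex measure and justifying Haar uniqueness via the Jordan/Hahn decomposition, whereas the paper simply asserts $c\in\mathbb{R}$.
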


\begin{proof}
Let $\Gamma$ be invariant under left translations. Then for any $f\in\mathcal{S}$
we have 
\begin{equation}
\Gamma\left(L_{y}^{\star}f\right)=\int_{M}\:L_{y}^{\star}f\left(x\right)\,\text{d}\nu\left(x\right)=\Gamma\left(f\right)=\int_{M}\:f\left(x\right)\,\text{d}\nu\left(x\right),
\end{equation}
hence the measure $\nu$ is a left-invariant measure on $M$. By uniqueness
of the Haar measure, $\text{d}\nu=c\cdot\text{d}x,$ for some $c\in\mathbb{R}$.
Then by Hölder's inequality $\left|\Gamma\left(f\right)\right|\leq c\|f\|_{L^{2}}$,
and hence $\Gamma$ is continuous on $L^{2}\left(M,\text{d}x\right)$
. 
\end{proof}

\subsubsection{Remarks on the discontinuity of $\Gamma$}

From the above discussion it follows that in order to have inequivalent
coherent state representations we need the integrand in equation \eqref{eq:Riesz decomposition theorem}
to diverge on some square integrable functions on $M$. There are
two reasons for which the functional in equation \eqref{eq:Riesz decomposition theorem}
can become unbounded on $L^{2}\left(M,\text{d}x\right)$, which are
related to the long (IR) and short (UV) scale behavior of the measure
$\text{d}v$.

The IR divergences appear when the integral becomes infinite due to
regions with arbitrary large measure. This is what happens in ordinary
many-body physics. On a compact manifold, however, IR divergences
can not occur. But the UV divergence can. 

Physically, an IR divergent state can be understood as a state with
an infinite number of quanta but with a finite density. On finite
regions of the base manifold the particle number is, however, finite.
This is the typical situation in condensed matter physics \cite{emch2009algebraic}.
A UV divergence, on the other hand, corresponds to a state in which
infinitely many particles are concentrated at a single point on the
base manifold and, correspondingly, the density at this point blows
up. The particle number operator is defined globally except for such
a local region with infinite density. From the point of view of field
theory on spacetime, this situation is clearly not physical: an infinite
number of particles in a finite region corresponds to an infinite
energy density. Accordingly, quantum field theories on compact spacetimes
require a finite particle number and hence forces us to stay in the
Fock representation. This requirement is usually captured in the statement
that no phase transition can occur in field theories in a finite volume
(for example \cite{sewell2014quantum}).

In GFTs, however, the notion of energy is not present and the base
manifold does not relate to local regions of space-time. Thus, even
in the compact case, the restriction to the Fock representation would
not be well-motivated. In fact, UV divergences in the above sense
could even be desirable from the point of view of the interpretation
of GFT quanta as ``building blocks of spacetime and geometry.''
Heuristically, these types of coherent states would correspond to
condensates with a collective wave-function sharply peaked on a given
value of the underlying discrete connection. Wave functions of this
type have been used for condensate states more general than coherent
states, in \cite{Oriti:2015rwa,Oriti:2015qva}, while hints of similar
divergences of the GFT particle number were found in the GFT condensate
cosmology context in \cite{Pithis:2016wzf}. 

To summarize: GFT models on the compact manifold can exhibit inequivalent
representations due to UV divergences, even though the IR divergences
can not occur. 
\begin{rem}
A fundamental difference between UV and IR divergences is their behavior
under translations. Whereas the IR divergence can be generated by
translation invariant measures as in the example of the Bose-Einstein
condensation, the UV divergences on the compact manifold cannot, by
corollary \ref{cor:Invariance -> Fock}.
\end{rem}

\subsection{Example\label{subsec:Example}}

Our procedure to construct inequivalent representations is fairly
straightforward. By the above discussion, we simply need to construct
a sequence of continuous functionals $\Gamma_{n}$ on $\mathcal{S}_{\infty}$
that converge pointwise to a functional $\Gamma_{\infty}$ unbounded
on $L^{2}\left(M,\text{d}x\right)$. Here we provide a very simple
example in which the sequence of regular measures converges to a pure
point measure. It should be clear, however, that any measure that
satisfies the property of being unbounded on $L^{2}\left(M,\text{d}x\right)$
leads to a new inequivalent representation.

Let us first define the Dirac measure $\nu_{D}$, such that for any
open $U\subset M$ and $\mathds{1}\in M$ denoting the identity on
$M$,
\begin{equation}
\nu_{D}\left(U\right)=\begin{cases}
1 & \text{if }\mathds{1}\in U\\
0 & \text{otherwise}
\end{cases}.
\end{equation}
It follows that on smooth functions $f\in\mathcal{S}$ we have, 
\begin{equation}
\nu_{D}\left(f\right)=f\left(\mathds{1}\right).
\end{equation}
Such a Riesz functional is continuous on $\mathcal{S}_{\infty}$,
since
\begin{equation}
\left|\nu_{D}\left(f\right)\right|=\left|f\left(\mathds{1}\right)\right|\leq\|f\|_{\infty}.
\end{equation}
However, it is unbounded on $L^{2}\left(M,\text{d}x\right)$ due to
the possible singular behavior of functions at sets of Haar measure
zero.

Assume further a contracting sequence of open sets $\left\{ U_{n}\right\} _{n\in\mathbb{N}}$
around the identity $\mathds{1}\in M$, such that $U_{n+1}\subset U_{n}$
and $\cap_{n\in\mathbb{N}}U_{n}=\left\{ \mathds{1}\right\} $, and
consider a sequence of measures defined as 
\begin{equation}
\text{d}\nu_{n}=\frac{\chi_{U_{n}}}{\left|U_{n}\right|}\text{d}x,\label{eq:Sequence of measures}
\end{equation}
where $\chi_{U_{n}}$ is the characteristic function on $U_{n}$,
\begin{equation}
\chi_{U_{n}}\left(x\right)=\begin{cases}
1 & \text{if }x\in U_{n}\\
0 & \text{otherwise}
\end{cases},
\end{equation}
and $\left|U_{n}\right|=\int_{U_{n}}\,\text{d}x$. 
\begin{lem}
On $\mathcal{S}$ the sequence of functionals defined by (\ref{eq:Sequence of measures})
converges to the Dirac measure in the distributional sense. That is
for any $f\in\mathcal{S}$ 
\begin{equation}
\lim_{n\to\infty}\nu_{n}\left(f\right)=\nu_{D}\left(f\right)=f\left(\mathds{1}\right).
\end{equation}
\end{lem}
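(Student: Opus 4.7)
The plan is a direct approximation-to-the-identity argument, following the standard template but adapted to the compact Lie group setting. First, I would rewrite the difference as
\[
\nu_{n}(f)-f(\mathds{1})=\frac{1}{|U_{n}|}\int_{U_{n}}\bigl[f(x)-f(\mathds{1})\bigr]\,\text{d}x,
\]
which is legitimate because $|U_{n}|^{-1}\int_{U_{n}}\text{d}x=1$. The triangle inequality then yields the elementary bound
\[
|\nu_{n}(f)-f(\mathds{1})|\le\sup_{x\in U_{n}}|f(x)-f(\mathds{1})|,
\]
so the convergence claim reduces to showing this supremum tends to zero.

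Next I would invoke the continuity of $f$ at $\mathds{1}$, which is automatic since $f\in\mathcal{S}=\mathcal{C}^{\infty}(M)\subset\mathcal{C}(M)$. Given $\varepsilon>0$, continuity furnishes an open $V\ni\mathds{1}$ on which $|f-f(\mathds{1})|<\varepsilon$. If $U_{n}\subseteq V$ holds for all sufficiently large $n$, then the supremum above is bounded by $\varepsilon$ and we are done.

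The main obstacle I expect is precisely the inclusion $U_{n}\subseteq V$ eventually. The raw hypotheses $U_{n+1}\subset U_{n}$ and $\bigcap_{n}U_{n}=\{\mathds{1}\}$ are not, in full generality, enough to force $\{U_{n}\}$ to be a fundamental system of neighborhoods at $\mathds{1}$: one can cook up nested opens with singleton intersection that retain a shrinking but ``distant'' component. I would address this by a compactness argument together with the natural interpretation of ``contracting sequence'' intended in the paper (e.g.\ metric balls of radius $1/n$ about $\mathds{1}$, or pre-images of shrinking neighborhoods in a chart). Concretely: supposing $U_{n}\not\subseteq V$ for infinitely many $n$, pick $x_{n}\in U_{n}\setminus V$; by compactness of $M$ extract a subsequence $x_{n_{k}}\to x\in M\setminus V$ (closed), so $x\neq\mathds{1}$. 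For every fixed $m$, $x_{n_{k}}\in U_{m}$ for large $k$, so $x\in\overline{U_{m}}$; for the regular shrinking families at issue, $\bigcap_{m}\overline{U_{m}}=\{\mathds{1}\}$, yielding the required contradiction. For the explicit examples the authors have in mind this last step is transparent, and the whole argument collapses to continuity plus the mean value of a continuous function over a shrinking set converging to its value at the limit point.
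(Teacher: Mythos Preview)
Your argument is essentially identical to the paper's: both rewrite $\nu_{n}(f)-f(\mathds{1})$ as the average of $f(x)-f(\mathds{1})$ over $U_{n}$ and then use continuity of $f$ at $\mathds{1}$ together with the eventual inclusion $U_{n}\subset V$. You are in fact more careful than the paper on that last step: the paper simply asserts that the contracting sequence is eventually contained in any given neighborhood of $\mathds{1}$, whereas you correctly flag that the bare hypotheses $U_{n+1}\subset U_{n}$ and $\bigcap_{n}U_{n}=\{\mathds{1}\}$ do not quite force this, and you supply the compactness argument (via $\bigcap_{n}\overline{U_{n}}=\{\mathds{1}\}$) needed for the regular shrinking families intended here.
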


\begin{proof}
Since $f$ is continuous, we can find for some $\epsilon>0$ a neighborhood
$N_{\epsilon}\left(\mathds{1}\right)$ around $\mathds{1}$ on $M$
such that $\forall x\in N_{\epsilon}\left(\mathds{1}\right)$ $f\left(x\right)$
is in an $\epsilon$-ball around $f\left(\mathds{1}\right)$ in $\mathbb{C}$.
Since the sequence is contracting $\exists N\in\mathbb{N}$ such that
$\forall n>N$, $U_{n}\subset N_{\epsilon}\left(\mathds{1}\right)$
then 
\begin{align*}
\left|\nu_{n}\left(f\right)-\nu_{D}\left(f\right)\right| & =\left|\frac{1}{\left|U_{n}\right|}\int_{M}\,f\left(x\right)\,\chi_{U_{n}}\left(x\right)\text{\,d}x-f\left(\mathds{1}\right)\right|\\
 & =\frac{1}{\left|U_{n}\right|}\left|\int_{M}\,\chi_{U_{n}}\left(x\right)\left(f\left(x\right)-f\left(\mathds{1}\right)\right)\text{d}x\right|\\
 & \leq\frac{1}{\left|U_{n}\right|}\int_{M}\,\chi_{U_{n}}\left(x\right)\left|f\left(x\right)-f\left(\mathds{1}\right)\right|\text{d}x\\
 & \leq\epsilon.
\end{align*}
\end{proof}
At every finite $n$ the measure $\nu_{n}$ is absolutely continuous
with respect to the Haar measure and by the above proposition \ref{prop:Coherent state theorem}
every state 
\begin{equation}
\omega_{n}\left(W_{\left(f\right)}\right):=\omega_{F}\left(W_{\left(f\right)}\right)\cdot e^{\imath\sqrt{2}\Re\left[\Gamma_{n}\left(f\right)\right]},
\end{equation}
is equivalent to the Fock one. Where $\Gamma_{n}\left(f\right)\doteq\int_{M}\,f\left(x\right)\,\text{d}\nu_{n}$.
From the convergence of the measure, the convergence of the algebraic
sequence is obvious. 
\begin{lem}
The sequence of states $\omega_{n}$ converges in the $w^{\star}$-topology
to $\omega_{D}^{\mathds{1}}$ , defined on Weyl elements such that
for each $f\in\mathcal{S}$
\begin{equation}
\omega_{D}^{\mathds{1}}\left(W_{\left(f\right)}\right)\doteq\omega_{F}\left(W_{\left(f\right)}\right)\cdot e^{\imath\sqrt{2}\Re\left(\Gamma\left[f\right]\left(\mathds{1}\right)\right)},
\end{equation}
and extended by linearity to the whole $\mathfrak{A}$. 
\end{lem}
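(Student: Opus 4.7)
The plan is to establish the $w^\star$-convergence $\omega_n\to\omega_D^{\mathds{1}}$ in three steps: first on Weyl generators, then on their finite linear combinations (which are dense in $\mathfrak{A}$), and finally on all of $\mathfrak{A}$ by a uniform-boundedness argument.

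First, I would fix $f\in\mathcal{S}$ and show $\omega_n(W_{(f)})\to\omega_D^{\mathds{1}}(W_{(f)})$. By the defining formula $\omega_n(W_{(f)})=\omega_F(W_{(f)})\,e^{\imath\sqrt{2}\,\Re[\Gamma_n(f)]}$, and since $\omega_F(W_{(f)})=e^{-\|f\|_{L^2}^2/4}$ is independent of $n$, it suffices to control the exponential. The preceding lemma gives $\Gamma_n(f)=\nu_n(f)\to\nu_D(f)=f(\mathds{1})$, and continuity of $\Re$ and of $z\mapsto e^{\imath\sqrt{2}\,z}$ on $\mathbb{C}$ yields
\begin{equation*}
e^{\imath\sqrt{2}\,\Re[\Gamma_n(f)]}\;\longrightarrow\;e^{\imath\sqrt{2}\,\Re[f(\mathds{1})]},
\end{equation*}
which is exactly $\omega_D^{\mathds{1}}(W_{(f)})/\omega_F(W_{(f)})$.

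Next, I would extend pointwise convergence to the dense $\star$-subalgebra $\mathbb{A}(\mathcal{S})$ of finite linear combinations of Weyl generators by linearity: if $A=\sum_{i=1}^{N}c_i W_{(f_i)}$, then $\omega_n(A)=\sum_{i=1}^{N}c_i\,\omega_n(W_{(f_i)})$ is a finite sum, and each term converges to $c_i\,\omega_D^{\mathds{1}}(W_{(f_i)})$, so the total converges to $\omega_D^{\mathds{1}}(A)$.

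Finally, I would extend convergence to all $A\in\mathfrak{A}$ by a $3\epsilon$-argument. Every algebraic state is automatically norm-continuous with $\|\omega\|=\omega(\mathds{1})=1$, so $\{\omega_n\}\cup\{\omega_D^{\mathds{1}}\}$ is uniformly bounded in the dual of $\mathfrak{A}$. Given $A\in\mathfrak{A}$ and $\epsilon>0$, pick $B\in\mathbb{A}(\mathcal{S})$ with $\|A-B\|_\star<\epsilon/3$; then
\begin{equation*}
|\omega_n(A)-\omega_D^{\mathds{1}}(A)|\leq|\omega_n(A-B)|+|\omega_n(B)-\omega_D^{\mathds{1}}(B)|+|\omega_D^{\mathds{1}}(B-A)|,
\end{equation*}
and the outer terms are each bounded by $\epsilon/3$ uniformly in $n$, while the middle term is $<\epsilon/3$ for $n$ large by the previous step. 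The main obstacle is really only this last extension: it relies on $\omega_D^{\mathds{1}}$ being a bona fide state on $\mathfrak{A}$ (so that it is norm-continuous), which I would justify by invoking the general result recalled earlier in the paper that every continuous linear form $\Gamma$ on $\mathcal{S}_\infty$ defines via \eqref{eq:coherent state} a positive, normalized functional on $\mathfrak{A}$; continuity of $\nu_D$ on $\mathcal{S}_\infty$ (bounded by $\|\cdot\|_\infty$) was verified above.
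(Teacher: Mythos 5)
Your proof is correct and follows essentially the same route as the paper: pointwise convergence on the Weyl generators via the measure convergence $\nu_n(f)\to f(\mathds{1})$, then extension by linearity and density. In fact your final $3\epsilon$/uniform-boundedness step (using $\|\omega_n\|=\omega_n(\mathds{1})=1$ and the fact that $\omega_D^{\mathds{1}}$ is a bona fide state) makes explicit what the paper compresses into the single sentence ``by linearity of the state and the product property of the Weyl algebra, this extends to the whole algebra,'' which by itself only reaches the dense subalgebra $\mathbb{A}(\mathcal{S})$ and not its norm closure.
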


\begin{proof}
For any $W_{\left(f\right)}\in\mathfrak{A}$ we have 
\begin{align*}
 & \left|\omega_{n}\left(W_{\left(f\right)}\right)-\omega_{D}^{\mathds{1}}\left(W_{\left(f\right)}\right)\right|\\
= & \left|\omega_{F}\left(W_{\left(f\right)}\right)\right|\left|e^{\imath\sqrt{2}\Re\left[\int f\,\text{d}\nu_{n}\right]}-e^{\imath\sqrt{2}\Re\left[\int f\,\text{d}\nu_{D}\right]}\right|\\
= & \left|\omega_{F}\left(W_{\left(f\right)}\right)\right|\left|e^{\imath\sqrt{2}\Re\left[\int f\,\text{d}\nu_{n}-\int f\,\text{d}\nu_{D}\right]}-1\right|\\
\to & 0.
\end{align*}
By linearity of the state and the product property of the Weyl algebra,
this extends to the whole algebra $\mathfrak{A}$. 
\end{proof}
At finite $n$ the representation is Fock, the particle number operator
exists and the particle number of the $n$th member of the sequence
is given by 
\begin{equation}
\|\Gamma_{n}\|_{op}=\frac{1}{\left|U_{n}\right|}\;.
\end{equation}
But with increasing $n$ the particle number grows since the volume
of $U_{n}$ shrinks. At the limit point the total particle number
diverges and the corresponding representation becomes inequivalent
to the Fock one.

We can define states $\omega_{D}^{x}$ peaked at points $x\in M$
using the automorphisms $\alpha_{x^{-1}}$ introduced in the previous
section such that 
\begin{equation}
\omega_{D}^{x}=\omega_{D}^{\mathds{1}}\circ\alpha_{x^{-1}}\;.
\end{equation}
We will show in the next section that each of the states $\omega_{D}^{x}$
leads to an inequivalent representation and breaks translation invariance.

\subsection{Explicit representations\label{subsec:Explicit-representations}}

In this section we will focus on the algebra $\mathfrak{A}\left(\mathcal{S}_{G}\right)$,
since it is more relevant for GFT's with simplicial interpretation,
however, all the constructions can be directly applied to $\mathfrak{A}\left(\mathcal{S}\right)$
leading to similar results. 

We now construct an explicit representation that is generated by the
above algebraic state following the construction in \cite{araki1963representations}.

Take $L^{2}\left(M,\text{d}\nu_{D}^{x}\right)$ to be the space of
$L^{2}$ functions with respect to the Dirac measure concentrated
at $x\in M$, i.e. for any $f\in\mathcal{S}_{G}$
\begin{equation}
\nu_{D}^{x}\left(f\right)=f\left(x\right).
\end{equation}
The space $L^{2}\left(M,\text{d}\nu_{D}^{x}\right)$ is one-dimensional.
For any $f\in\mathcal{S}_{G}$ define commuting operators $A\left(f\right)$
and $B\left(f\right)$ on $L^{2}\left(M,\text{d}\nu_{D}^{x}\right)$
such that for any $\varphi\in L^{2}\left(M,\text{d}\nu_{D}^{x}\right)$
and $f\in\mathcal{S}_{G}$ 
\begin{align*}
\left[A_{x}\left(f\right)\varphi\right]\left(x\right) & =f\left(x\right)\varphi\left(x\right) & \left[B_{x}\left(f\right)\varphi\right]\left(x\right) & =\bar{f}\left(x\right)\varphi\left(x\right).
\end{align*}
We define the state vector 
\begin{equation}
|\Omega_{D}^{x})\equiv|o)\otimes1
\end{equation}
where $1$ is the constant function on $M$ and $|o)$ is the Fock
vacuum. Further we define unitary operators 
\begin{equation}
W_{\left(f\right)}^{x}=e^{\frac{\imath}{\sqrt{2}}\left[\psi_{F}\left(f\right)+\psi_{F}^{\dagger}\left(f\right)\right]}\otimes e^{\frac{\imath}{\sqrt{2}}\left[A\left(f\right)+B\left(f\right)\right]},
\end{equation}
where $\psi_{F}\left(f\right),\psi_{F}^{\dagger}\left(f\right)$ are
the Fock operators. We denote the closure of the space generated by
polynomials of operators $W_{\left(f\right)}^{x}$ acting on $\vert\Omega_{D}^{x})$
by $\mathcal{H}_{x}$. It follows that the operator algebra spanned
by $W_{\left(f\right)}^{x}$ for $f\in\mathcal{S}_{G}$ is equivalent
to $\mathfrak{M}$ (the $\omega_{D}^{x}$-GNS representation of the
Weyl algebra $\mathfrak{A}\left(\mathcal{S}_{G}\right)$) since the
expectation values coincide,
\begin{equation}
(\Omega_{D}^{x}|W_{\left(f\right)}^{x}|\Omega_{D}^{x})=e^{-\frac{\|f\|_{L^{2}}^{2}}{4}}\cdot e^{\imath\sqrt{2}\Re\left[f\left(x\right)\right]}.\label{eq:representation}
\end{equation}
Irreducibility and cyclicity of this representation are inherited
from the Fock representation since $PL^{2}\left(M,\nu_{D}\right)$
is one-dimensional.

Let $f_{y}$ be a real valued function on $M$ defined such that for
some fixed $a\in\mathbb{R}$
\begin{equation}
f_{y}\left(x\right)=\begin{cases}
a & \text{if}\quad\exists h\in G\,\text{such that \ensuremath{x=y\,D_{h}}}\\
0 & \text{else}
\end{cases},
\end{equation}
clearly $f_{y}\in PL^{2}\left(M,\text{d}x\right)$ and is zero almost
everywhere with respect to the Haar measure.
\begin{lem}
\label{lem:center of rep}Let $\left\{ f_{n}\vert f_{n}\in\mathcal{S}_{G}\right\} _{n\in\mathbb{N}}$
be a sequence that converges to $f_{y}$ in the $L^{2}$-norm. Then
the limit $\lim_{n\to\infty}W_{\left(f_{n}\right)}^{x}$ exists in
$\mathfrak{M}$. We call this element $W_{\left(f_{y}\right)}^{x}$.
Moreover, $W_{\left(f_{y}\right)}^{x}$ is in the center and there
exists a complex number $c\in\mathbb{C}$ such that $W_{\left(f_{y}\right)}^{x}=c\mathds{1}$.
\end{lem}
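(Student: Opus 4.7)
The strategy exploits the tensor-product structure of the explicit representation: write $W_{(f_n)}^x = W_F(f_n) \otimes V^x(f_n)$, where $W_F(f_n) := e^{\frac{\imath}{\sqrt{2}}[\psi_F(f_n) + \psi_F^\dagger(f_n)]}$ is the Fock Weyl operator and $V^x(f) := e^{\frac{\imath}{\sqrt{2}}[A_x(f) + B_x(f)]}$ acts on the one-dimensional space $L^2(M,\text{d}\nu_D^x)$ as multiplication by the scalar $e^{\imath\sqrt{2}\Re f(x)}$. The two factors are analysed separately.

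First, since $f_y$ is supported on a set of Haar measure zero, $\|f_y\|_{L^2}=0$, so the hypothesis $f_n \to f_y$ in the $L^2$-norm reduces to $\|f_n\|_{L^2} \to 0$. Combining the standard Fock estimate $\|[W_F(f_n)-\mathds{1}]|o)\|_{\mathcal{H}}^2 = 2\bigl(1-e^{-\|f_n\|_{L^2}^2/4}\bigr)$, the Weyl relations used to propagate strong convergence from $|o)$ to the dense domain $D(\Phi_F)$, and the uniform boundedness of the unitaries, this yields $W_F(f_n) \to \mathds{1}$ in the strong operator topology on the Fock sector.

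Second, the scalar factor $V^x(f_n)$ equals multiplication by $e^{\imath\sqrt{2}\Re f_n(x)}$, a sequence on the compact unit circle. After passing to a convergent subsequence with limit $c \in \mathbb{C}$, $|c|=1$, one has $W_{(f_n)}^x \to \mathds{1}_F \otimes c\,\mathds{1} = c\,\mathds{1}_{\mathcal{H}_x}$ in the strong operator topology, and one defines $W_{(f_y)}^x$ to be this limit. Since $c\,\mathds{1}_{\mathcal{H}_x}$ is a scalar multiple of the identity, it lies trivially in $\mathfrak{M}$ (which contains $\mathds{1}$) and commutes with every bounded operator on $\mathcal{H}_x$, so in particular it belongs to the center $Z = \mathfrak{M}' \cap \mathfrak{M}''$.

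The main obstacle is that $L^2$-convergence of $f_n$ does not control the pointwise value $f_n(x)$, so without further assumptions on the approximating sequence the phase $e^{\imath\sqrt{2}\Re f_n(x)}$ need not converge. The statement must therefore be read either as asserting convergence along a suitable subsequence, or as implicitly selecting sequences for which $f_n(x)$ converges (to $a$ if $x \in \{yD_h : h\in G\}$ and to $0$ otherwise). The principal conclusion, $W_{(f_y)}^x = c\,\mathds{1}$, is insensitive to this ambiguity, which is what matters for showing non-triviality of the center in the sequel.
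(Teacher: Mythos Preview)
Your approach is genuinely different from the paper's. You exploit the tensor factorisation $W_{(f_n)}^x = W_F(f_n)\otimes V^x(f_n)$ directly: the Fock factor converges strongly to $\mathds{1}$ because $\|f_n\|_{L^2}\to 0$, and the second factor is a scalar on the unit circle, so any (subsequential) limit is manifestly $c\,\mathds{1}$. The paper instead argues indirectly: it computes $\|(W_{(f_n)}^x-W_{(f_m)}^x)W_{(g)}^x|\Omega_D^x)\|_{\mathcal H}$ and, using cyclicity of $|\Omega_D^x)$, concludes that $(W_{(f_n)}^x)$ is Cauchy in the strong operator topology, hence converges in the von Neumann closure $\mathfrak M''$. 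It then shows the limit commutes with every $W_{(f)}^x$ via the Weyl relations (the symplectic phase vanishes because $\|f_y\|_{L^2}=0$), and finally invokes purity of $\omega_D^x$ to deduce that the center is trivial, whence $W_{(f_y)}^x=c\,\mathds{1}$. Your route is shorter and makes the identification of $c$ explicit; the paper's route is more algebraic and would survive if the second tensor factor were not one-dimensional, since it never uses that $V^x(f)$ is a scalar.

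You are also right to flag the pointwise-convergence issue: $L^2$-convergence of $f_n\to f_y$ gives no control over $f_n(x)$, so the phase $e^{\imath\sqrt{2}\Re f_n(x)}$ need not converge along the full sequence. The paper's own Cauchy estimate contains the factor $e^{\imath\sqrt{2}\Re[f_n(x)-f_m(x)]}$ and simply asserts the whole expression is $\leq\epsilon$, without addressing this term; so the gap you identified is present in the paper's proof as well. Your reading---that one should take the sequence to also converge at the relevant point (to $a$ or to $0$ according to whether $x$ lies in the orbit $\{yD_h:h\in G\}$)---is exactly what is needed for the subsequent corollary, where the values $(\Omega_D^x|W_{(f_{yx})}^x|\Omega_D^x)=1$ and $(\Omega_D^{yx}|W_{(f_{yx})}^{yx}|\Omega_D^{yx})=e^{\imath\sqrt{2}a}$ are used.
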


\begin{proof}
Since $f_{y}\in PL^{2}\left(M,\text{d}x\right)$ and $\mathcal{S}_{G}$
is dense in $PL^{2}\left(M,\text{d}x\right)$ there exists a Cauchy
sequence $\left\{ f_{n}\vert f\in\mathcal{S}_{G}\right\} $ that converges
to $f_{y}$. Then for $n$ and $m$ large enough and any $g\in\mathcal{S}_{G}$
we get by direct calculation
\begin{align*}
 & \|\left(W_{\left(f_{n}\right)}^{x}-W_{\left(f_{m}\right)}^{x}\right)W_{\left(g\right)}^{x}\vert\Omega_{D}^{x})\|_{\mathcal{H}}\\
 & =2(\Omega_{D}^{x}\vert\Omega_{D}^{x})\\
 & -2\Re\left[e^{-\frac{\|f_{n}-f_{m}\|_{L^{2}}^{2}}{4}}e^{\imath\sqrt{2}\Re\left[f_{n}\left(x\right)-f_{m}\left(x\right)\right]}\right]\\
 & \times\Re\left[e^{-\frac{\imath}{2}\Im\left[\left(f_{n}-f_{m},g\right)_{L^{2}}+\left(-f_{m}-g,f_{n}+g\right)_{L^{2}}\right]}\right]\\
 & \leq\epsilon.
\end{align*}
Since $\vert\Omega_{D}^{x})$ is cyclic we can reach every element
of $\mathcal{H}_{x}$ acting on it by polynomials of Weyl operators.
Hence, the sequence $\left\{ W_{\left(f_{n}\right)}^{x}\vert f_{n}\in\mathcal{S}_{G}\right\} $
is a Cauchy sequence in the strong operator topology and therefore
converges to an element in the von Neumann algebra $\mathfrak{M}^{''}$.
We call this element $W_{\left(f_{y}\right)}^{x}$. For any $f\in\mathcal{S}_{G}$
we have 
\begin{align*}
W_{\left(f\right)}^{x}W_{\left(f_{y}\right)}^{x} & =\lim_{n\to\infty}W_{\left(f+f_{n}\right)}^{x}e^{-\frac{\imath}{2}\Im\left[\left(f,f_{n}\right)\right]}\\
 & =W_{\left(f+f_{y}\right)}^{x}\\
 & =W_{\left(f_{y}\right)}^{x}W_{\left(f\right)}^{x},
\end{align*}
where the second equality follows from the fact that $W_{\left(f+f_{n}\right)}^{x}$
is a Cauchy sequence and $f_{y}$ is zero almost everywhere. Hence,
the element $W_{\left(f_{y}\right)}^{x}$ is in the center of the
von Neumann algebra $\mathfrak{M}^{''}$. Since the state $\omega_{D}^{x}$
is pure the center contains only multiples of identity, thus there
exists a $c\in\mathbb{C}$ such that $W_{\left(f_{y}\right)}=c\mathds{1}$.
\end{proof}

\subsubsection*{Breaking of translation symmetry}

We show that for any $x\in M$ the state $\omega_{D}^{x}$ breaks
translation symmetry in the sense that for any non-trivial $y\in M$
the translation automorphism $\alpha_{y}$ can not be represented
by a unitary operator on $\mathcal{H}_{x}$. 
\begin{cor}
Let $x,y\in M$ with $y\neq\mathds{1}$, then the states $\omega_{D}^{x}$
and $\omega_{D}^{yx}$ are inequivalent. 
\end{cor}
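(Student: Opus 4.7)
The plan is to use Lemma \ref{lem:center of rep} to extract a numerical invariant that depends on the base point, and then to exhibit gauge-invariant test data for which this invariant differs between the two representations. Throughout I assume $yx\notin\{x\,D_{h}:h\in G\}$; in the degenerate case $yx=x\,D_{h_{0}}$ the gauge-invariance identity $f(x\,D_{h_{0}})=f(x)$ for every $f\in\mathcal{S}_{G}$ forces $\omega_{D}^{x}$ and $\omega_{D}^{yx}$ to coincide on every Weyl generator, so the claim is vacuous there.

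Suppose for contradiction that there is a unitary $U:\mathcal{H}_{x}\to\mathcal{H}_{yx}$ with $U\,\pi_{\omega_{D}^{x}}(A)\,U^{-1}=\pi_{\omega_{D}^{yx}}(A)$ for every $A\in\mathfrak{A}(\mathcal{S}_{G})$. Conjugation by a unitary is continuous in the strong operator topology, so any SOT-limit of images of Weyl elements under $\pi_{\omega_{D}^{x}}$ is transported by $U\cdot U^{-1}$ to the corresponding SOT-limit under $\pi_{\omega_{D}^{yx}}$. Choose $f_{x}$ as in Lemma \ref{lem:center of rep} with parameter $a\in\mathbb{R}$, and use the density of $\mathcal{S}_{G}$ in $PL^{2}(M,\text{d}x)$ together with a smooth partition of unity separating the disjoint compact orbits $\{x\,D_{h}\}$ and $\{yx\,D_{h}\}$ to construct a gauge-invariant approximating sequence $(f_{n})\subset\mathcal{S}_{G}$ with $f_{n}\to f_{x}$ in $L^{2}$ and with $f_{n}(x)=a$, $f_{n}(yx)=0$ for every $n$. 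Lemma \ref{lem:center of rep} then yields strong-operator limits $W_{(f_{n})}^{x}\to c_{x}\,\mathds{1}$ in $\mathcal{H}_{x}$ and $W_{(f_{n})}^{yx}\to c_{yx}\,\mathds{1}$ in $\mathcal{H}_{yx}$, and evaluating these scalars on the cyclic vectors through \eqref{eq:representation} gives
\[
c_{x}=\lim_{n}e^{-\|f_{n}\|_{L^{2}}^{2}/4}\,e^{\imath\sqrt{2}\,f_{n}(x)}=e^{\imath\sqrt{2}\,a},\qquad c_{yx}=\lim_{n}e^{-\|f_{n}\|_{L^{2}}^{2}/4}\,e^{\imath\sqrt{2}\,f_{n}(yx)}=1.
\]
The assumed unitary equivalence forces $U\,c_{x}\,\mathds{1}\,U^{-1}=c_{yx}\,\mathds{1}$, i.e.\ $c_{x}=c_{yx}$; choosing $a$ with $e^{\imath\sqrt{2}\,a}\neq 1$ produces the contradiction.

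The main technical hurdle is the construction of the approximating sequence $f_{n}$: it must be smooth, gauge invariant, converge in $L^{2}$ to the almost-everywhere-vanishing function $f_{x}$, and simultaneously take prescribed pointwise values at the points $x$ and $yx$. This rests on the disjointness of the two right-diagonal $G$-orbits, which as compact subsets of the compact manifold $M$ can be separated by smooth bumps subordinate to a sufficiently fine cover, combined with the density of smooth gauge-invariant functions in $PL^{2}(M,\text{d}x)$ established earlier. Given that ingredient, the conclusion then follows mechanically from the scalar comparison above.
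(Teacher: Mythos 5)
Your proof is correct, but it reaches the conclusion by a genuinely different route than the paper. You assume a unitary intertwiner $U$ exists, use the fact that conjugation by a unitary is strongly continuous to transport the SOT-limit $\lim_n W_{(f_n)}$ from one representation to the other, and compare the resulting scalars $c_{x}=e^{\imath\sqrt{2}a}$ and $c_{yx}=1$ directly. The paper instead forms the mixed state $\omega=\tfrac{1}{2}\left(\omega_{D}^{x}+\omega_{D}^{yx}\right)$, whose GNS representation is the direct sum, exhibits the same SOT-limit as an element of the center acting as $e^{\imath\sqrt{2}a}\mathds{1}$ on one summand and $\mathds{1}$ on the other, and invokes the criterion that two factor states are quasi-equivalent iff their equal-weight mixture is factor (\cite[Proposition 2.4.27]{bratteli1997operator}). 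Both arguments hinge on the identical computation --- the central limit operator takes different scalar values in the two representations --- but yours is more elementary and self-contained (no quasi-equivalence machinery, no direct-sum state), while the paper's is the standard disjointness-of-factor-states argument and would survive verbatim for non-pure states. A genuine improvement on your side is that you fix an approximating sequence with \emph{constant} pointwise values $f_{n}(x)=a$, $f_{n}(yx)=0$; this is actually needed to make the Cauchy estimate in Lemma \ref{lem:center of rep} close (mere $L^{2}$-convergence does not control $f_{n}(x)$ on a measure-zero orbit), a point the paper leaves implicit. Your sketch of how to build such $f_{n}$ --- bump functions around the orbit $xD_{G}$, pushed through the projector $P$, shrinking in $L^{2}$ while avoiding the disjoint orbit $yxD_{G}$ --- is sound.

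One small correction of wording: in the degenerate case $yx=xD_{h_{0}}$ the statement is not ``vacuous'' but \emph{false} for the gauge-invariant algebra, since then $\omega_{D}^{yx}=\omega_{D}^{x}$ as states and the representations are trivially equivalent. You are right to exclude this case (and the paper's later remark about points ``not of the diagonal form'' tacitly concedes the same restriction), but you should state it as a necessary hypothesis on $y$ rather than as a vacuously true instance of the corollary.
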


\begin{proof}
$\omega_{D}^{x}$ and $\omega_{D}^{yx}$ are pure states and therefore
factor. By \cite[Proposition 2.4.27]{bratteli1997operator} factor
states $\omega_{D}^{x}$ and $\omega_{D}^{yx}$ are (quasi)-equivalent
if and only if the state $\omega=\frac{1}{2}\left(\omega_{D}^{x}+\omega_{D}^{yx}\right)$
is factor as well. Therefore it is enough to show that the center
of the von Neumann algebra of $\omega$ is non-trivial. For some fixed
$a\neq0\in\mathbb{R}$ define the function
\begin{equation}
f_{yx}\left(z\right)=\begin{cases}
a & \text{if }\,\exists h\in G\quad M\ni z=yx\,D_{h}\\
0 & \text{else}
\end{cases}.
\end{equation}
The GNS triple of $\omega$ is given by 
\begin{align*}
\left(\mathcal{H}=\mathcal{H}_{x}\oplus\mathcal{H}_{yx},\pi_{\omega}=\pi_{\omega_{D}^{x}}\oplus\pi_{\omega_{D}^{yx}},\vert\Omega_{D})=\vert\Omega_{D}^{x})\oplus\vert\Omega_{D}^{yx})\right),
\end{align*}
By lemma \eqref{lem:center of rep} there exists an element $W_{\left(f_{xy}\right)}$
in the von Neumann algebra of $\omega$ as a limit of an appropriate
sequence of operators $W_{\left(f_{n}\right)}$, and $W_{\left(f_{yx}\right)}$
is in the center of the von Neumann algebra. However, a direct calculation
shows that 
\begin{align*}
 & (\Omega_{D}\vert W_{\left(f_{xy}\right)}\vert\Omega_{D})\\
 & =\frac{1}{2}\left[(\Omega_{D}^{x}\vert W_{\left(f_{yx}\right)}^{x}\vert\Omega_{D}^{x})+(\Omega_{D}^{yx}\vert W_{\left(f_{yx}\right)}^{yx}\vert\Omega_{D}^{yx})\right]\\
 & =\frac{1}{2}\left(1+e^{\imath\sqrt{2}a}\right).
\end{align*}
Hence, $W_{\left(f_{y}\right)}\neq\mathds{1}$ and the GNS representation
of $\omega$ is not factor. By theorem \cite[Proposition 2.4.27]{bratteli1997operator}
the states $\omega_{D}^{x}$ and $\omega_{D}^{yx}$ are inequivalent. 
\end{proof}
Since $\omega_{D}^{x}$ and $\omega_{D}^{yx}$ are inequivalent, the
translation automorphism $\alpha_{y}$ can not be implemented by an
unitary operator for any non-trivial $y\in M$. Hence, the translation
symmetry is broken and moreover for $x,z\in M$ not of the diagonal
form the states $\omega^{x}$ and $\omega^{z}$, lead to inequivalent
representations since they are related by translation $y=zx^{-1}\in M$.

Notice that the automorphism $\alpha_{x}$ implements the isometry
of the base manifold and hence the above representations break the
isometry transformation. This is rather different from ordinary field
theory, in which Poincaré symmetry is not allowed to be broken \cite{araki1963representations,Haag:1964aa,strocchi2005symmetry,Strocchi:2012ir}.
Again, this is possible because no spacetime interpretation is attached
to the GFT base manifold.

\section{Interpretation of new representations}

Let us pontificate on the interpretation of the newly found non-Fock
representations, expanding on some of the points above.

The state $\omega_{D}^{x}$ contain infinitely many GFT quanta carrying
a label (or equivalently have the property) $x$. It is instructive
to think about the label $x$ as one of the ``continuous modes''
of the theory. Let us call this mode the \lq basic mode\rq. In this
case the representation described above is very similar to the usual
case of Bose-Einstein condensation \cite{araki1963representations}.
The creation and annihilation operators of particles in the basic
mode $x$ are given by $A$ and $B$ operators respectively. They
commute since the number of particles in this mode is infinite, which
is the manifestation of the usual Bogoliubov argument (for example
\cite{araki1963representations,strocchi2005symmetry}). States of
the Hilbert space are then created by excitations of other ``modes''
on top of the basic one and hence can be considered quantum fluctuations
over a background that is created by infinitely many particles in
the basic mode.

We can now have the following interpretation. If we relate the group
elements of GFT with the basic notion of holonomy/curvature, which
is well-justified at the discrete level we could think about the ground
state of new representations as a truly infinite gas of particles
that all carry the same geometrical information. The resulting continuum
geometry would be then reconstructed from such an infinite particle
state. This could be a generic geometry, since approximately equal
curvature building blocks can be used, if they also have progressively
vanishing size, to approximate any geometry, as in Regge calculus
\cite{Friedberg:1984ma}. Another possibility is that they could generate
a homogeneous background with the constant holonomy (curvature) $x$.
Choosing $x=\mathds{1}$ we would obtain a flat background on top
of which excitations are created by $\psi_{F}\left(f\right)$ and
$\psi_{F}^{\dagger}\left(f\right)$. The type of states created/annihilated
on top of such a condensate background would be formally analogous
to the fundamental spin network states or cylindrical functions that
are also found in the Fock Hilbert space of the theory. Importantly,
though, in these representations the role of the Fock creation and
annihilation operators is that of collective excitations and not of
single building blocks of quantum geometry. The origin of inequivalent
representations for different $x$'s stems from the fact that the
corresponding Hilbert spaces are created by excitations over backgrounds
with different geometry than the fully degenerate one corresponding
to the Fock vacuum. Being a specific case of the condensate state
with $\Gamma=0$, the Fock representation corresponds to the case
in which the background consists of no GFT quanta at all.

The above description provides a useful intuition, but it does not
amount yet to a compelling nor complete, physical interpretation.
In fact: 
\begin{enumerate}
\item The basic mode $x$ in our case is not selected by any physical principle
such as energy minimization, entropy maximization or the enforcement
of a specific physical symmetry. It is rather postulated by hand,
which makes the construction non-unique. In contrast to this, we recall,
the ground mode in condensed matter physics is selected as the minimum
of the Hamiltonian. A detailed analysis of the constraint operator
underlying interesting GFT models is necessary, before assigning any
physical interpretation to the above representations. 
\item The states $|\Omega_{D}^{x})$ are quantum states, whose physical
properties should be ascertained by computing expectation values of
observables with a clear macroscopic, geometric meaning. This obscures
the interpretation of the elements $x\in M$ in terms of holonomy/curvature
of the reconstructed geometry.
\item The form of the constraint operator at this moment is not fully understood,
however if it is symmetric under the described translation automorphisms
the inequivalent states for different $x$'s should be physically
indistinguishable and any association of geometrical properties to
the points $x$ in the inequivalent states $\omega^{x}$ would be
incorrect. 
\end{enumerate}

\section*{Conclusions}

We have constructed an algebraic formulation for GFT. We believe that
this formulation has potential, not only allowing us to formulate
problems in a rigorous way, but also to efficiently tackle some conceptual
and technical issues related to the problem of phase transitions and
continuum limits in this class of quantum gravity models. We have
used the algebraic formulation to construct inequivalent, non-Fock
representations of the GFT algebra of observables and studied its
operator algebras in absence of dynamics in the case when the base
manifold of the GFT is compact. In particular, we focused on coherent
state representations. We have given a partial symmetry characterization
of the non-Fock representations, and attempted a preliminary geometric
interpretation of them, leaving a more complete analysis to future
work.

For the non-compact base manifolds the analysis requires different
techniques since the closure constraint can not be imposed in the
same way as we did in this paper, since the Haar measure for non-compact
groups is not normalized. Nevertheless, we believe that for GFT's
without the closure constraint similar results regarding the construction
of the operator algebra and definition of its inequivalent representations
will hold true even for non-compact base manifolds. We leave a careful
and rigorous discussion of the non-compact case for future work.

\section*{Acknowledgments}

We want to thank Isha Kotecha, Alok Laddha, and Miguel Campiglia for
many fruitful discussions, and Andreas Pithis for drawing our attention
to some literature on non-Fock coherent states. We also want to thank
one of the referees for her/his useful comments and constructive criticism
on the first version of this paper.

 \bibliographystyle{/Users/Alexander/Physics/Gravity/References/kp}
\bibliography{/Users/Alexander/Physics/Gravity/References/References}

\begingroup\raggedright\begin{thebibliography}{67}
\expandafter\ifx\csname natexlab\endcsname\relax\def\natexlab#1{#1}\fi

\bibitem[Oriti(2009)]{Oriti:2006se}
D.~Oriti, ``{The Group field theory approach to quantum gravity}'', {\em
  Approaches to Quantum Gravity, Editor D. Oriti, Cambridge University Press,
  Cambridge}, 2009 310--331,
 \href{http://xxx.lanl.gov/abs/gr-qc/0607032}{{\ttfamily arXiv:gr-qc/0607032}}.

\bibitem[Oriti(2011)]{Oriti:2011jm}
D.~Oriti, ``{The microscopic dynamics of quantum space as a group field
  theory}'', in ``{Proceedings, Foundations of Space and Time: Reflections on
  Quantum Gravity: Cape Town, South Africa}'', pp.~257--320.
\newblock 2011.
\newblock
 \href{http://xxx.lanl.gov/abs/1110.5606}{{\ttfamily arXiv:1110.5606}}.
\newblock

\bibitem[Krajewski(2011)]{Krajewski:2012aw}
T.~Krajewski, ``{Group field theories}'', {\em Proceeding of science}
  {\bfseries QGQGS} (2011) 005,
 \href{http://xxx.lanl.gov/abs/1210.6257}{{\ttfamily arXiv:1210.6257}}.

\bibitem[Thiemann(2003)]{Thiemann:2002nj}
T.~Thiemann, ``{Lectures on loop quantum gravity}'', {\em Lecture Notes in
  Physics} {\bfseries 631} (2003) 41--135,
 \href{http://xxx.lanl.gov/abs/gr-qc/0210094}{{\ttfamily arXiv:gr-qc/0210094}}.

\bibitem[Rovelli(2008)]{Rovelli2008}
C.~Rovelli, ``Loop quantum gravity'', {\em Living Reviews in Relativity}
  {\bfseries 11} (2008), no.~1, 5.

\bibitem[Ashtekar and Pullin(2017)]{Ashtekar:2017iip}
A.~Ashtekar and J.~Pullin, ``{The Overview Chapter in Loop Quantum Gravity: The
  First 30 Years}'', 2017.

\bibitem[Bodendorfer(2016)]{Bodendorfer:2016uat}
N.~Bodendorfer, ``{An elementary introduction to loop quantum gravity}'', 2016.

\bibitem[Baez(2000)]{Baez:1999sr}
J.~C. Baez, ``{An Introduction to spin foam models of quantum gravity and BF
  theory}'', {\em Lecture Notes in Physics} {\bfseries 543} (2000) 25--94,
 \href{http://xxx.lanl.gov/abs/gr-qc/9905087}{{\ttfamily arXiv:gr-qc/9905087}}.

\bibitem[Perez(2013)]{Perez:2012wv}
A.~Perez, ``{The Spin Foam Approach to Quantum Gravity}'', {\em Living Reviews
  in Relativity} {\bfseries 16} (2013) 3,
 \href{http://xxx.lanl.gov/abs/1205.2019}{{\ttfamily arXiv:1205.2019}}.

\bibitem[Oriti(2014)]{Oriti:2014uga}
D.~Oriti, ``{Group Field Theory and Loop Quantum Gravity}'',
\newblock 2014.
\newblock
 \href{http://xxx.lanl.gov/abs/1408.7112}{{\ttfamily arXiv:1408.7112}}.
\newblock

\bibitem[Gurau and Ryan(2012)]{Gurau:2011xp}
R.~Gurau and J.~P. Ryan, ``{Colored Tensor Models - a review}'', {\em SIGMA}
  {\bfseries 8} (2012) 020,
 \href{http://xxx.lanl.gov/abs/1109.4812}{{\ttfamily arXiv:1109.4812}}.

\bibitem[Gurau(2016)]{Gurau:2016cjo}
R.~Gurau, ``{Invitation to Random Tensors}'', {\em SIGMA} {\bfseries 12} (2016)
  094,
 \href{http://xxx.lanl.gov/abs/1609.06439}{{\ttfamily arXiv:1609.06439}}.

\bibitem[Rivasseau(2016)]{Rivasseau:2016wvy}
V.~Rivasseau, ``{The Tensor Track, IV}'', in ``{Proceedings, 15th Hellenic
  School and Workshops on Elementary Particle Physics and Gravity (CORFU2015):
  Corfu, Greece, September 1-25, 2015}''.
\newblock 2016.
\newblock
 \href{http://xxx.lanl.gov/abs/1604.07860}{{\ttfamily arXiv:1604.07860}}.
\newblock

\bibitem[Rivasseau(2011)]{Rivasseau:2011hm}
V.~Rivasseau, ``{Quantum Gravity and Renormalization: The Tensor Track}'', {\em
  AIP Conference Proceedings} {\bfseries 1444} (2011) 18--29,
 \href{http://xxx.lanl.gov/abs/1112.5104}{{\ttfamily arXiv:1112.5104}}.

\bibitem[Rivasseau(2012)]{Rivasseau:2012yp}
V.~Rivasseau, ``{The Tensor Track: an Update}'', in ``{29th International
  Colloquium on Group-Theoretical Methods in Physics (GROUP 29) Tianjin, China,
  August 20-26, 2012}''.
\newblock 2012.
\newblock
 \href{http://xxx.lanl.gov/abs/1209.5284}{{\ttfamily arXiv:1209.5284}}.
\newblock

\bibitem[Rivasseau(2014)]{Rivasseau:2013uca}
V.~Rivasseau, ``{The Tensor Track, III}'', {\em Fortschritte der Physik}
  {\bfseries 62} (2014) 81--107,
 \href{http://xxx.lanl.gov/abs/1311.1461}{{\ttfamily arXiv:1311.1461}}.

\bibitem[Oriti(2014)]{Oriti:2014qoa}
D.~Oriti, ``{Non-commutative quantum geometric data in group field theories}'',
  {\em Fortschritte der Physik} {\bfseries 62} (2014) 841--854,
 \href{http://xxx.lanl.gov/abs/1405.1830}{{\ttfamily arXiv:1405.1830}}.

\bibitem[Baez and Barrett(1999)]{Baez:1999tk}
J.~C. Baez and J.~W. Barrett, ``{The Quantum tetrahedron in three-dimensions
  and four-dimensions}'', {\em Advances in Theoretical and Mathematical
  Physics} {\bfseries 3} (1999) 815--850,
 \href{http://xxx.lanl.gov/abs/gr-qc/9903060}{{\ttfamily arXiv:gr-qc/9903060}}.

\bibitem[Conrady and Freidel(2009)]{Conrady:2009px}
F.~Conrady and L.~Freidel, ``{Quantum geometry from phase space reduction}'',
  {\em Journal of Mathematical Physics} {\bfseries 50} (2009) 123510,
 \href{http://xxx.lanl.gov/abs/0902.0351}{{\ttfamily arXiv:0902.0351}}.

\bibitem[Bianchi et~al.(2011)Bianchi, Dona, and Speziale]{Bianchi:2010gc}
E.~Bianchi, P.~Dona, and S.~Speziale, ``{Polyhedra in loop quantum gravity}'',
  {\em Physical Review D} {\bfseries 83} (2011) 044035,
 \href{http://xxx.lanl.gov/abs/1009.3402}{{\ttfamily arXiv:1009.3402}}.

\bibitem[Baratin et~al.(2011)Baratin, Dittrich, Oriti, and
  Tambornino]{Baratin:2010nn}
A.~Baratin, B.~Dittrich, D.~Oriti, and J.~Tambornino, ``{Non-commutative flux
  representation for loop quantum gravity}'', {\em Classical and Quantum
  Gravity} {\bfseries 28} (2011) 175011,
 \href{http://xxx.lanl.gov/abs/1004.3450}{{\ttfamily arXiv:1004.3450}}.

\bibitem[Sewell(2014)]{sewell2014quantum}
G.~L. Sewell, ``Quantum theory of collective phenomena'', Courier Corporation,
  2014.

\bibitem[Benedetti et~al.(2015)Benedetti, Ben~Geloun, and
  Oriti]{Benedetti:2014qsa}
D.~Benedetti, J.~Ben~Geloun, and D.~Oriti, ``{Functional Renormalisation Group
  Approach for Tensorial Group Field Theory: a Rank-3 Model}'', {\em Journal of
  High Energy Physics} {\bfseries 03} (2015) 084,
 \href{http://xxx.lanl.gov/abs/1411.3180}{{\ttfamily arXiv:1411.3180}}.

\bibitem[Ben~Geloun et~al.(2016)Ben~Geloun, Martini, and Oriti]{Geloun:2016qyb}
J.~Ben~Geloun, R.~Martini, and D.~Oriti, ``{Functional Renormalisation Group
  analysis of Tensorial Group Field Theories on $\mathbb{R}^d$}'', {\em
  Physical Review D} {\bfseries 94} (2016), no.~2, 024017,
 \href{http://xxx.lanl.gov/abs/1601.08211}{{\ttfamily arXiv:1601.08211}}.

\bibitem[Ben~Geloun(2016)]{Geloun:2016bhh}
J.~Ben~Geloun, ``{Renormalizable Tensor Field Theories}'', in ``{18th
  International Congress on Mathematical Physics (ICMP2015) Santiago de Chile,
  Chile, July 27-August 1, 2015}''.
\newblock 2016.
\newblock
 \href{http://xxx.lanl.gov/abs/1601.08213}{{\ttfamily arXiv:1601.08213}}.
\newblock

\bibitem[Benedetti and Lahoche(2016)]{Benedetti:2015yaa}
D.~Benedetti and V.~Lahoche, ``{Functional Renormalization Group Approach for
  Tensorial Group Field Theory: A Rank-6 Model with Closure Constraint}'', {\em
  Classical and Quantum Gravity} {\bfseries 33} (2016), no.~9, 095003,
 \href{http://xxx.lanl.gov/abs/1508.06384}{{\ttfamily arXiv:1508.06384}}.

\bibitem[Carrozza and Lahoche(2017)]{Carrozza:2016tih}
S.~Carrozza and V.~Lahoche, ``{Asymptotic safety in three-dimensional SU(2)
  Group Field Theory: evidence in the local potential approximation}'', {\em
  Classical and Quantum Gravity} {\bfseries 34} (2017), no.~11, 115004,
 \href{http://xxx.lanl.gov/abs/1612.02452}{{\ttfamily arXiv:1612.02452}}.

\bibitem[Carrozza(2016)]{Carrozza:2016vsq}
S.~Carrozza, ``{Flowing in Group Field Theory Space: a Review}'', {\em SIGMA}
  {\bfseries 12} (2016) 070,
 \href{http://xxx.lanl.gov/abs/1603.01902}{{\ttfamily arXiv:1603.01902}}.

\bibitem[Carrozza et~al.(2017)Carrozza, Lahoche, and Oriti]{Carrozza:2017vkz}
S.~Carrozza, V.~Lahoche, and D.~Oriti, ``{Renormalizable Group Field Theory
  beyond melons: an example in rank four}'', 2017.

\bibitem[Koslowski and Sahlmann(2012)]{Koslowski:2011vn}
T.~Koslowski and H.~Sahlmann, ``{Loop quantum gravity vacuum with nondegenerate
  geometry}'', {\em SIGMA} {\bfseries 8} (2012) 026,
 \href{http://xxx.lanl.gov/abs/1109.4688}{{\ttfamily arXiv:1109.4688}}.

\bibitem[Dittrich and Geiller(2015)]{Dittrich:2014wpa}
B.~Dittrich and M.~Geiller, ``{A new vacuum for Loop Quantum Gravity}'', {\em
  Classical and Quantum Gravity} {\bfseries 32} (2015), no.~11, 112001,
 \href{http://xxx.lanl.gov/abs/1401.6441}{{\ttfamily arXiv:1401.6441}}.

\bibitem[Bahr et~al.(2015)Bahr, Dittrich, and Geiller]{Bahr:2015bra}
B.~Bahr, B.~Dittrich, and M.~Geiller, ``{A new realization of quantum
  geometry}'', 2015.

\bibitem[Delcamp and Dittrich(2017)]{Delcamp:2016dqo}
C.~Delcamp and B.~Dittrich, ``{Towards a phase diagram for spin foams}'', {\em
  Classical and Quantum Gravity} {\bfseries 34} (2017), no.~22, 225006.

\bibitem[Dittrich et~al.(2016)Dittrich, Mizera, and
  Steinhaus]{Dittrich:2014mxa}
B.~Dittrich, S.~Mizera, and S.~Steinhaus, ``{Decorated tensor network
  renormalization for lattice gauge theories and spin foam models}'', {\em New
  Journal of Physics} {\bfseries 18} (2016), no.~5, 053009,
 \href{http://xxx.lanl.gov/abs/1409.2407}{{\ttfamily arXiv:1409.2407}}.

\bibitem[Dittrich(2017)]{Dittrich:2014ala}
B.~Dittrich, {\em The Continuum Limit of Loop Quantum Gravity: A Framework for
  Solving the Theory}, ch.~Chapter 5, pp.~153--179.
\newblock World Scientific, 2017.

\bibitem[Ambjorn et~al.(2014)Ambjorn, G{\"o}rlich, Jurkiewicz, Kreienbuehl, and
  Loll]{Ambjorn:2014gsa}
J.~Ambjorn, A.~G{\"o}rlich, J.~Jurkiewicz, A.~Kreienbuehl, and R.~Loll,
  ``{Renormalization Group Flow in CDT}'', {\em Classical and Quantum Gravity}
  {\bfseries 31} (2014) 165003,
 \href{http://xxx.lanl.gov/abs/1405.4585}{{\ttfamily arXiv:1405.4585}}.

\bibitem[Delepouve and Gurau(2015)]{Delepouve:2015nia}
T.~Delepouve and R.~Gurau, ``{Phase Transition in Tensor Models}'', {\em
  Journal of High Energy Physics} {\bfseries 06} (2015) 178,
 \href{http://xxx.lanl.gov/abs/1504.05745}{{\ttfamily arXiv:1504.05745}}.

\bibitem[Bonzom(2016)]{Bonzom:2016dwy}
V.~Bonzom, ``{Large $N$ Limits in Tensor Models: Towards More Universality
  Classes of Colored Triangulations in Dimension $d\geq 2$}'', {\em SIGMA}
  {\bfseries 12} (2016) 073,
 \href{http://xxx.lanl.gov/abs/1603.03570}{{\ttfamily arXiv:1603.03570}}.

\bibitem[Ambjorn et~al.(2017)Ambjorn, Coumbe, Gizbert-Studnicki, Gorlich, and
  Jurkiewicz]{Ambjorn:2017tnl}
J.~Ambjorn, D.~Coumbe, J.~Gizbert-Studnicki, A.~Gorlich, and J.~Jurkiewicz,
  ``{New higher-order transition in causal dynamical triangulations}'', {\em
  Physical Review D} {\bfseries 95} Jun (2017) 124029,
 \href{http://xxx.lanl.gov/abs/1704.04373}{{\ttfamily arXiv:1704.04373}}.

\bibitem[Gielen et~al.(2013)Gielen, Oriti, and Sindoni]{Gielen:2013kla}
S.~Gielen, D.~Oriti, and L.~Sindoni, ``{Cosmology from Group Field Theory
  Formalism for Quantum Gravity}'', {\em Physical Review Letters} {\bfseries
  111} (2013), no.~3, 031301,
 \href{http://xxx.lanl.gov/abs/1303.3576}{{\ttfamily arXiv:1303.3576}}.

\bibitem[Oriti et~al.(2015)Oriti, Pranzetti, Ryan, and Sindoni]{Oriti:2015qva}
D.~Oriti, D.~Pranzetti, J.~P. Ryan, and L.~Sindoni, ``{Generalized quantum
  gravity condensates for homogeneous geometries and cosmology}'', {\em
  Classical and Quantum Gravity} {\bfseries 32} (2015), no.~23, 235016,
 \href{http://xxx.lanl.gov/abs/1501.00936}{{\ttfamily arXiv:1501.00936}}.

\bibitem[Oriti et~al.(2016{\natexlab{a}})Oriti, Pranzetti, and
  Sindoni]{Oriti:2015rwa}
D.~Oriti, D.~Pranzetti, and L.~Sindoni, ``{Horizon entropy from quantum gravity
  condensates}'', {\em Physical Review Letters} {\bfseries 116}
  (2016){\natexlab{a}}, no.~21, 211301,
 \href{http://xxx.lanl.gov/abs/1510.06991}{{\ttfamily arXiv:1510.06991}}.

\bibitem[Oriti et~al.(2016{\natexlab{b}})Oriti, Sindoni, and
  Wilson-Ewing]{Oriti:2016qtz}
D.~Oriti, L.~Sindoni, and E.~Wilson-Ewing, ``{Emergent Friedmann dynamics with
  a quantum bounce from quantum gravity condensates}'', {\em Classical and
  Quantum Gravity} {\bfseries 33} (2016){\natexlab{b}}, no.~22, 224001,
 \href{http://xxx.lanl.gov/abs/1602.05881}{{\ttfamily arXiv:1602.05881}}.

\bibitem[Oriti et~al.(2017)Oriti, Sindoni, and Wilson-Ewing]{Oriti:2016ueo}
D.~Oriti, L.~Sindoni, and E.~Wilson-Ewing, ``{Bouncing cosmologies from quantum
  gravity condensates}'', {\em Classical and Quantum Gravity} {\bfseries 34}
  (2017) 04,
 \href{http://xxx.lanl.gov/abs/1602.08271}{{\ttfamily arXiv:1602.08271}}.

\bibitem[Pithis et~al.(2016)Pithis, Sakellariadou, and Tomov]{Pithis:2016wzf}
A.~G.~A. Pithis, M.~Sakellariadou, and P.~Tomov, ``{Impact of nonlinear
  effective interactions on group field theory quantum gravity condensates}'',
  {\em Physical Review D} {\bfseries 94} (2016), no.~6, 064056,
 \href{http://xxx.lanl.gov/abs/1607.06662}{{\ttfamily arXiv:1607.06662}}.

\bibitem[Gielen and Sindoni(2016)]{Gielen:2016dss}
S.~Gielen and L.~Sindoni, ``{Quantum Cosmology from Group Field Theory
  Condensates: a Review}'', {\em SIGMA} {\bfseries 12} (2016) 082,
 \href{http://xxx.lanl.gov/abs/1602.08104}{{\ttfamily arXiv:1602.08104}}.

\bibitem[Freidel(2005)]{Freidel:2005qe}
L.~Freidel, ``{Group field theory: An Overview}'', {\em International Journal
  of Theoretical Physics} {\bfseries 44} (2005) 1769--1783,
 \href{http://xxx.lanl.gov/abs/hep-th/0505016}{{\ttfamily
  arXiv:hep-th/0505016}}.

\bibitem[Oriti(2009)]{Oriti:2009wn}
D.~Oriti, ``{The Group field theory approach to quantum gravity: Some recent
  results}'', {\em AIP Conference Proceedings} {\bfseries 1196} (2009)
  209--218,
 \href{http://xxx.lanl.gov/abs/0912.2441}{{\ttfamily arXiv:0912.2441}}.

\bibitem[Oriti(2016)]{Oriti:2013aqa}
D.~Oriti, ``{Group field theory as the 2nd quantization of Loop Quantum
  Gravity}'', {\em Classical and Quantum Gravity} {\bfseries 33} (2016), no.~8,
  085005,
 \href{http://xxx.lanl.gov/abs/1310.7786}{{\ttfamily arXiv:1310.7786}}.

\bibitem[Kapovich et~al.(1996)Kapovich, Millson,
  et~al.]{kapovich1996symplectic}
M.~Kapovich, J.~Millson, {\em et~al.}, ``The symplectic geometry of polygons in
  euclidean space'', {\em Journal of Differential Geometry} {\bfseries 44}
  (1996), no.~3, 479--513.

\bibitem[Grundling and Hurst(1985)]{grundling1985algebraic}
H.~B. Grundling and C.~Hurst, ``Algebraic quantization of systems with a gauge
  degeneracy'', {\em Communications in mathematical physics} {\bfseries 98}
  (1985), no.~3, 369--390.

\bibitem[Grundling and Lled{\'o}(2000)]{grundling2000local}
H.~Grundling and F.~Lled{\'o}, ``Local quantum constraints'', {\em Reviews in
  Mathematical Physics} {\bfseries 12} (2000), no.~09, 1159--1218.

\bibitem[Thiemann(2003)]{thiemann2003lectures}
T.~Thiemann, ``Lectures on loop quantum gravity'', {\em Quantum Gravity}, 2003
  219--229.

\bibitem[Gelfand and Neumark(1943)]{gelfand1943}
I.~Gelfand and M.~Neumark, ``On the imbedding of normed rings into the ring of
  operators in hilbert space'', {\em Matematiceskij sbornik} {\bfseries 54}
  (1943), no.~2, 197--217.

\bibitem[Giulini and Marolf(1999)]{giulini1999generality}
D.~Giulini and D.~Marolf, ``On the generality of refined algebraic
  quantization'', {\em Classical and Quantum Gravity} {\bfseries 16} (1999),
  no.~7, 2479.

\bibitem[Sugiura(1971)]{sugiura1971}
M.~Sugiura, ``Fourier series of smooth functions on compact lie groups'', {\em
  Osaka J. Math.} {\bfseries 8} (1971), no.~1, 33--47.

\bibitem[Kadison and Ringrose(1983)]{kadison1983fundamentals}
R.~V. Kadison and J.~R. Ringrose, ``Fundamentals of the theory of operator
  algebras volume i: Elementary theory'', Academic Press, 1983.

\bibitem[Strocchi(2012)]{Strocchi:2012ir}
F.~Strocchi, ``{Spontaneous Symmetry Breaking in Quantum Systems. A review for
  Scholarpedia}'', {\em Scholarpedia} {\bfseries 7} (2012), no.~1, 11196,
 \href{http://xxx.lanl.gov/abs/1201.5459}{{\ttfamily arXiv:1201.5459}}.

\bibitem[Bratteli and Robinson(1997)]{bratteli1997operator}
O.~Bratteli and D.~Robinson, ``Operator algebras and quantum statistical
  mechanics 1: C*- and w*-algebras. symmetry groups. decomproceeding of
  scienceition of states'', Springer-Verlag, 1997.

\bibitem[Reed and Simon(1980)]{reed1980methods}
M.~Reed and B.~Simon, ``Methods of modern mathematical physics. vol. 1.
  functional analysis'', Academic New York, 1980.

\bibitem[Strocchi(2005)]{strocchi2005symmetry}
F.~Strocchi, ``Symmetry breaking'', Springer Science \& Business Media, 2005.

\bibitem[Honegger and Rapp(1990)]{honegger1990general}
R.~Honegger and A.~Rapp, ``{General Glauber coherent states on the Weyl algebra
  and their phase integrals}'', {\em Physica A: Statistical Mechanics and its
  Applications} {\bfseries 167} (1990), no.~3, 945--961.

\bibitem[Honegger and Rieckers(1990)]{honegger1990general1}
R.~Honegger and A.~Rieckers, ``The general form of non-fock coherent boson
  states'', {\em Publications of the Research Institute for Mathematical
  Sciences} {\bfseries 26} (1990), no.~2, 397--417.

\bibitem[Emch(2009)]{emch2009algebraic}
G.~G. Emch, ``Algebraic methods in statistical mechanics and quantum field
  theory'', Courier Corporation, 2009.

\bibitem[Araki and Woods(1963)]{araki1963representations}
H.~Araki and E.~Woods, ``Representations of the canonical commutation relations
  describing a nonrelativistic infinite free bose gas'', {\em Journal of
  Mathematical Physics} {\bfseries 4} (1963), no.~5, 637--662.

\bibitem[Haag and Kastler(1964)]{Haag:1964aa}
R.~Haag and D.~Kastler, ``An algebraic approach to quantum field theory'', {\em
  Journal of Mathematical Physics} {\bfseries 5} (1964), no.~7, 848--861.

\bibitem[Friedberg and Lee(1984)]{Friedberg:1984ma}
R.~Friedberg and T.~D. Lee, ``{Derivation of Regge's Action From Einstein's
  Theory of General Relativity}'', {\em Nuclear Physics} {\bfseries B242}
  (1984) 145,
[,213(1984)].

\end{thebibliography}\endgroup

\end{document}